\newtheorem{theorem}{Theorem} 
\newtheorem{proposition}{Proposition}
\newtheorem{property}{Property}
\newtheorem{lemma}{Lemma}
\theoremstyle{definition}
\newtheorem*{remark}{Remark}
\newtheorem{example}{Example}
\def\ds{\displaystyle}
\DeclareMathOperator{\malt}{alt}
\newcommand{\tif}{\text{if }}
\DeclareMathOperator{\unb}{unb}
\newcommand{\Bool}{\mathbb{B}}
\newcommand{\mO}{\mathbb{O}}
\def\sign{\operatorname{sign}}
\newcommand{\leqnomode}{\tagsleft@true\let\veqno\@@leqno}
\newcommand{\reqnomode}{\tagsleft@false\let\veqno\@@eqno}
\title{$O_n$ is an $n$-MCFL}
\author{Kilian Gebhardt\(^1\)}
\address{\(^1\)Technische Universit\"at Dresden, Germany, \normalfont{\href{mailto:kilian@gebhardt.xyz}{kilian@gebhardt.xyz}}.}
\author{Fr\'ed\'eric Meunier\(^2\)}
\address{\(^2\)CERMICS, \'Ecole des Ponts ParisTech, France, \normalfont{\href{mailto:frederic.meunier@enpc.fr}{frederic.meunier@enpc.fr}}.} 
\author{Sylvain Salvati\(^3\)}
\address{\(^3\)Universit\'e de Lille, INRIA, CRIStAL, France, \normalfont{\href{mailto:sylvain.salvati@univ-lille.fr}{sylvain.salvati@univ-lille.fr}}.}
\begin{document}

\begin{abstract}
  Commutative properties in formal languages pose problems at the frontier of
  computer science, computational linguistics and computational group theory. A
  prominent problem of this kind is the position of the language $O_n$, the
  language that contains the same number of letters $a_i$ and $\bar a_i$ with
  $1\leq i\leq n$, in the known classes of formal languages. It has recently
  been shown that $O_n$ is a Multiple Context-Free Language (MCFL). However the
  more precise conjecture of Nederhof that $O_n$ is an MCFL of dimension $n$ was
  left open. We present two proofs of this conjecture, both relying
  on tools from algebraic topology. 
  On our way, we prove a variant of the necklace splitting theorem.
\end{abstract}

\maketitle

\section{Introduction}

The language $O_n$ is the language built on the alphabet $\Sigma_n = \{a_i,\bar
a_i\mid i \in [n]\}$ and that contains exactly all words $w$ which, for
every $i$ in $[n]$, have the same number of occurrences of $a_i$ and $\bar
a_i$. Writing $|w|_{c}$ the number of occurrences of the letter $c$ in $w$, this condition becomes $|w|_{a_i} =
|w|_{\bar a_i}$ for all $i$ in $[n]$. The problem of situating the languages $O_n$ within known
classes of languages has been raised at least in two different communities.

The first one is that of computational linguistics. This problem has attracted
attention with a language called $MIX$ (also called the Bach language as it was
introduced by 
Bach~\cite{bach81:_discontinuous_constituents_generalized_categorioal_grammars,
  bach88:_categorial_grammar_as_theories_of_languages,
  pullum83:_context_freeness_and_the_computer_processing_of_human_languages})
that is rationally equivalent to $O_2$. It was related to the research program
of
Joshi~\cite{joshi85:_tree_adjoining_grammars:_how_much_context_sensitivity_is_required_to_provide_reasonnable_structural_descriptions}
which consists in formally describing the class of languages corresponding to
Natural Languages. According to Joshi's terminology, this is the class of Mildly
Context Sensitive Languages. This program tries to give abstract properties of
this class while describing possible candidates
\cite{weir_phd,joshi91:_the_convergence_of_mildly_context_sensitive_grammar_formalisms}.
Among such candidates, a powerful one is formed by Multiple Context Free
Languages (MCFL)~\cite{seki91:_multiple_context_free_grammars}. According to
Joshi \textit{et
  al.}~\cite{joshi91:_the_convergence_of_mildly_context_sensitive_grammar_formalisms}
the language ``$MIX$ can be regarded as the extreme case of free word order''
 and Mildly Context Sensitive Languages should ``perhaps'' not contain $MIX$ (and thus
$O_2$). This paper also mentions that the position of $MIX$ in classes of
languages, such as Tree Adjoining Languages, is not known and difficult to establish. In particular, it stresses
that it is not known whether $MIX$ is an
Indexed Language.

The second community that has also shown interest for the problem is that of computational
group theory which tries to identify properties of groups by means of
properties of their word problem. The word problem for a group consists in
describing the language of words that are equal to zero for a given presentation
(all presentations giving rise to rationally equivalent languages). For example,
Muller and Schuppe have characterized virtually free groups as exactly those groups
whose word problems are solved by context free
grammars~\cite{muller83:_group_theory_the_theory_of_ends_and_context_free_languages}.
A question that has been raised by that community is whether $O_2$---which
coincides with the
language corresponding to the word problem for the additive group
$(\mathbb{Z}^2,{+})$---is an Indexed
Language~\cite{gilman05:_formal_languag_applic_combin_group_theor}. This
question remains open. MCFL form a natural generalization of ``copyless''
Macro-Languages (see~\cite{DBLP:journals/ieicet/SekiK08} for a discussion) and
Macro-Languages are equivalent to Indexed Languages. This makes the
question of whether $O_2$ belongs to MCFL relevant.

A first important result in that line of research is that $O_2$ is not a
well-nested MCFL of dimension $2$~\cite{kanazawa_salvati12:_mix}, which solves a
long-standing open problem raised by
Joshi~\cite{joshi85:_tree_adjoining_grammars:_how_much_context_sensitivity_is_required_to_provide_reasonnable_structural_descriptions}.
Subsequently it has been shown that it is actually an
MCFL~\cite{salvati15:mix_2mcfl} and more precisely an MCFL of dimension $2$ (a
$2$-MCFL). Nederhof~\cite{DBLP:conf/acl/Nederhof16} has given a similar proof of
the same result. Later he conjectured that $O_n$ is an $n$-MCFL~\cite{Nederhof-2017-free-word-orders-and-MCFLs}.
As pointed in this work, a simple pumping argument shows
that $O_n$ cannot be an MCFL of dimension strictly smaller than $n$. More recently,
a breakthrough has been achieved by 
Ho~\cite{Thewordproblemofnisamultiplecontextfreelanguage} who proved that $O_n$
is an MCFL for every $n$. However the MCFL constructed in the proof
is of dimension larger than $n$, namely $8\left\lfloor \frac{n+1}{2} \right\rfloor-2$. All 
proofs related to these results are based on algebraic topology. While the
proofs of \cite{salvati15:mix_2mcfl} and
\cite{DBLP:conf/acl/Nederhof16} strongly rely on topological properties of the plane 
(existence of winding numbers of curves around points), the aforementioned proof
by Ho is based on the well-known Borsuk--Ulam theorem, a powerful theorem from algebraic topology which holds in any dimension. More
specifically, 
it uses a combinatorial application of this theorem, due to Alon 
and West~\cite{alon1986borsuk}: 
the necklace splitting theorem. 
In this paper, we also rely on related tools to prove Nederhof's conjecture,
namely that $O_n$ is an $n$-MCFL. Nederhof actually conjectures that a
particular Multiple Context-Free Grammar (MCFG) of dimension $n$ defines $O_n$. 
We prove a slightly stronger result by
showing that a grammar that uses a more restricted kind of rules is sufficient
to define $O_n$.

The article is structured as follows: In \Cref{sec:prelim}, we introduce notation regarding formal 
languages and MCFG as well as the grammar \(G_n\), which is an MCFG of dimension $n$.
\Cref{sec:main-result} establishes the main result namely that the language
of \(G_n\) is \(O_n\) using a decomposition lemma.
The decomposition lemma can on the one hand be derived from a variant of 
the necklace splitting theorem that we prove in \Cref{sec:necklace}. 
Alternatively, it can be obtained via purely combinatorial
proofs presented in \Cref{sec:combinatorial-proofs}.

\section{Background on Multiple Context Free Grammars}
\label{sec:prelim}

We write $[n]$ for the set $\{1,\dots,n\}$. For a given finite set $\Sigma$,
also called \emph{alphabet}, we write $\Sigma^\ast$ for the monoid freely
generated by $\Sigma$, and $\Sigma^{+}$ for the free semigroup generated by
$\Sigma$. The elements of $\Sigma$ are called \emph{letters} while the elements
of $\Sigma^\ast$ and $\Sigma^{+}$ are called \emph{strings} or \emph{words} and
we write $\varepsilon$ for the empty word. Given a word $w$, we write $|w|$ for
its length, and $|w|_c$ for the number of occurrences of the letter $c$ in
$w$. A \emph{language} is a subset of $\Sigma^\ast$.

We define the language $O_n$ as $\{w\in \Sigma_n^\ast\mid |w|_{a_i} =
|w|_{\bar a_i}\text { for } i \in [n]\}$ where $\Sigma_n$ is the alphabet $\{a_i,\bar a_i \mid i \in [n]\}$. 
For $\alpha$ in $\Sigma_n$, writing
$\bar\alpha$ exchanges $a_i$ and $\bar a_i$: if $\alpha$ is $a_i$, then $\bar \alpha$ is $\bar a_i$; if $\alpha$ is $\bar a_i$, then $\bar \alpha$ is $a_i$.
Two letters $\alpha$ and $\beta$ of $\Sigma_n$ are \emph{compatible}
when $\alpha = \bar\beta$. We extend the $(\bar \cdot)$ operation to words in
$\Sigma_n^\ast$ as follows: the word $\bar w$ is obtained from $w$ by applying $(\bar\cdot)$ to
each of its letters.

A ranked alphabet $\Omega$ is a pair $(\mathcal{A},\rho)$ where $\mathcal{A}$ is
a finite set and $\rho$ is a function from $\mathcal{A}$ to $\mathbb{N}$. For
$a$ in $\mathcal{A}$, the integer $\rho(a)$ is the \emph{rank} of $a$. We shall write
$\Omega^{(n)}$ for the set $\{a \in \mathcal{A} \mid \rho(a)=n\}$. The
\emph{dimension} of a ranked alphabet is the maximal rank of its elements.

A Multiple Context Free Grammar (MCFG) $G$ is a tuple $(\Omega, \Sigma, R, S)$
where $\Omega$ is a ranked alphabet of \emph{non-terminals}, $\Sigma$ is a
finite set of \emph{letters}, $R$ is a set of \emph{rules} and $S$ is an element
of $\Omega^{(1)}$ called \emph{initial non-terminal}. The rules in $R$ are of the form
\begin{equation}\label{eq:rule}
A(w_1,\ldots, w_n) \Rightarrow B_1 (x_{1,1},\ldots, x_{1,l_1}), \ldots,
B_p(x_{p,1},\ldots, x_{p,l_p})
\end{equation}
where $A$ is in $\Omega^{(n)}$, $B_k$ is in
$\Omega^{(l_k)}$, the $x_{k,j}$ are pairwise distinct variables and the
$w_j$ are elements of $(\Sigma\cup X)^\ast$ with $X = \{x_{k,j} \mid k \in
[p] \land j\in [l_k]\}$ and with the restriction that each $x_{k,j}$ may have at most
one occurrence in the string $w_1\cdots w_n$. Note that $p$ may be
equal to $0$, in which case the right part of the rule (the one on the right of
the $\Rightarrow$ symbol) is empty. Then we may write the rule by
omitting the symbol $\Rightarrow$. The \emph{dimension} of an MCFG is that of its
ranked alphabet of non-terminals. An MCFG of dimension at most $n$ is an
$n$-MCFG.

An MCFG such as $G$ defines \emph{judgments} of the form $\vdash_G A(s_1,\ldots,
s_n)$ where $A$ is in $\Omega^{(n)}$ and the $s_j$ belong to $\Sigma^\ast$. 
The notion of derivable judgments is defined inductively:
suppose we are given $p$ derivable judgments $\vdash_{G} B_k(s_{k,1},\ldots,
s_{k,l_k})$ where $B_k \in \Omega^{(l_k)}$ for $k$ in $[p]$.
For each rule of the form~\eqref{eq:rule}, the judgment 
$\vdash_G A(s_1, \ldots, s_n)$ is \emph{derivable} 
when each $s_j$ is obtained from $w_j$ by replacing each occurrence
of the variable $x_{k,j}$ by $s_{k,j}$.
The language defined by $G$, denoted by $\mathcal{L}(G)$, is the set
$\{w \in \Sigma^\ast \mid \vdash_G S(w) \mbox{ is derivable}\}$. The class of
languages that are definable by MCFGs is the class of \emph{Multiple
Context-Free Languages} (MCFL). Likewise, the class of languages definable by
$n$-MCFGs is the class of \emph{$n$-Multiple Context-Free Languages} ($n$-MCFL).

We define now $G_n$, the central $n$-MCFG for which we prove that 
it generates $O_n$. 
It uses two non-terminals $S$ and $I$ that are
respectively of rank $1$ and $n$. 
The non-terminal $S$ is the initial one.
The alphabet of $G_n$ is $\Sigma_n$.
The rules of the grammar $G_n$ are the following:
\begin{enumerate}
	\item\label{rule:init} $S(x_1\cdots x_n)\Rightarrow I(x_1,\dots,x_n)$.
	\item\label{rule:binary} $I(w_1,\dots, w_n)\Rightarrow I(x_1,\dots,x_n), I(y_1,\dots,y_n)$
	
	\qquad for all \(w_1, \ldots, w_n \in \{x_1, \ldots, x_n, y_1, \ldots, y_n\}^\ast\) such that 
	$w_1\cdots w_n = x_1y_1\cdots x_ny_n$.
	\item\label{rule:const} $I(w_1,\dots,w_n)\Rightarrow I(x_1,\dots,x_n)$ for all \(w_1, \ldots, w_n\) and all $\alpha\in\Sigma_n$, $k,\ell \in [n]$ such that $w_j = x_j$ for $j \neq k,\ell$ and
	\begin{itemize}
		\item $k \neq \ell$ implies $w_k \in \{\alpha x_k, x_k \alpha\}$ and  $w_{\ell} \in \{\bar\alpha x_\ell, x_\ell \bar\alpha\}$.
		\item $k = \ell$ implies $w_k =\alpha x_k \bar\alpha$.
	\end{itemize}
	\item\label{rule:empty} $I(\varepsilon,\dots,\varepsilon)$.
\end{enumerate}

Items numbered \eqref{rule:binary} and \eqref{rule:const} describe finite sets
of rules. The rules~(\ref{rule:binary}) are parametrized by a particular factorization
$(w_1,\dots,w_n)$ of $x_1y_1\cdots x_ny_n$. For example, when $n=3$ letting $w_1
= x_1y_1x_2$, $w_2 = y_2x_3$ and $w_3 =y_3$ is such a factorization; we have
$w_1w_2w_3 = x_1y_1x_2y_2x_3y_3$. A rule of the form~(\ref{rule:const}) adds a compatible pair of letters at distinct endpoints of the words. For
example, $I(a_1x_1,x_2,\bar a_1x_3,x_4)\Rightarrow
I(x_1,x_2,x_3,x_4)$ is such a rule for $n=4$.

\begin{example}
  The grammar $G_2$  contains the following rules:
  \allowdisplaybreaks
  \begin{eqnarray*}
    S(x_1x_2) &\Rightarrow& I(x_1,x_2)\\
    I(x_1y_1x_2y_2,\varepsilon)&\Rightarrow & I(x_1,x_2),\,I(y_1,y_2)\\
    I(x_1y_1x_2,y_2)&\Rightarrow & I(x_1,x_2),\,I(y_1,y_2)\\
    I(x_1y_1,x_2y_2)&\Rightarrow & I(x_1,x_2),\,I(y_1,y_2)\\
    I(x_1,y_1x_2y_2)&\Rightarrow & I(x_1,x_2),\,I(y_1,y_2)\\
    I(\varepsilon,x_1y_1x_2y_2)&\Rightarrow & I(x_1,x_2),\,I(y_1,y_2)\\
    I(\alpha x_1\bar{\alpha},x_2)&\Rightarrow&I(x_1,x_2) \quad \alpha\in \Sigma_2\\
    I(\alpha x_1,\bar{\alpha}x_2)&\Rightarrow&I(x_1,x_2) \quad \alpha\in \Sigma_2\\
    I(\alpha x_1,x_2\bar{\alpha})&\Rightarrow&I(x_1,x_2) \quad \alpha\in \Sigma_2\\
    I(x_1\alpha,\bar{\alpha}x_2)&\Rightarrow&I(x_1,x_2) \quad \alpha\in \Sigma_2\\
    I(x_1\alpha,x_2\bar{\alpha})&\Rightarrow&I(x_1,x_2) \quad \alpha\in \Sigma_2\\
    I(x_1,\alpha x_2\bar{\alpha})&\Rightarrow&I(x_1,x_2) \quad \alpha\in \Sigma_2\\
    I(\varepsilon,\varepsilon)\\
  \end{eqnarray*}
\end{example}

  It is usual to present derivations as trees where nodes have the following
  form:
  \begin{center}
    \begin{prooftree}
      \vdash_G B_1(s_{1,1},\dots,s_{1,l_1}) \dots \vdash_G
      B_p(s_{p,1},\dots,s_{p,l_p})
      \justifies%
      \vdash_G A(s_1,\dots,s_n)
    \end{prooftree}
  \end{center}
  when from the derivations of the $\vdash_G B_i(s_{i,1},\dots,s_{i,l_i})$ we
  can derive $\vdash_G A(s_1,\dots,s_n)$ using a rule of $G$. When $p$ equals
  $0$, there is no derivation above the bar. This means that rules with no
  right-hand part form the leaves of these trees.

  Using this notation we can show as follows that $a_1 a_1 \bar{a}_2\bar{a}_1
  \bar{a}_1 a_2$ is in the language of $G_2$. We use colors to identify each
  letter and allow one to infer the rules used in the derivation:

  \begin{center}
    \begin{prooftree}
      \begin{prooftree}
      	 \begin{prooftree}
      		\begin{prooftree}
      			\justifies%
      			\vdash_{G_2}I(\varepsilon,\varepsilon)
      		\end{prooftree}
      		\justifies%
      		\vdash_{G_2} I(\textcolor{red}{a_1},\textcolor{red}{\bar a_1})
      	\end{prooftree}
        \begin{prooftree}
          \begin{prooftree}
            \begin{prooftree}
              \justifies%
              \vdash_{G_2}I(\varepsilon,\varepsilon)
            \end{prooftree}
            
            \justifies%
            \vdash_{G_2}I(\textcolor{green!50!black}{\bar{a}_2},
            \textcolor{green!50!black}{a_2})
          \end{prooftree}
          \justifies%
          \vdash_{G_2}I(\textcolor{blue}{a_1}
          \textcolor{green!50!black}{\bar a_2}, \textcolor{blue}{\bar a_1}\textcolor{green!50!black}{a_2})
        \end{prooftree}
        \justifies%
        \vdash_{G_2} I(\textcolor{red}{a_1}, \textcolor{blue}{a_1}
        \textcolor{green!50!black}{\bar a_2}\textcolor{red}{\bar
          a_1}\textcolor{blue}{\bar a_1} \textcolor{green!50!black}{a_2})
      \end{prooftree}
      \justifies%
      \vdash_{G_2}S(\textcolor{red}{a_1} \textcolor{blue}{a_1}
      \textcolor{green!50!black}{\bar a_2}\textcolor{red}{\bar
        a_1}\textcolor{blue}{\bar a_1} \textcolor{green!50!black}{a_2})
    \end{prooftree}
  \end{center}

\section{Main result}
\label{sec:main-result}
Our main theorem is that the language of $G_n$ is $O_n$. The inclusion of
$\mathcal{L}(G_n)$ into $O_n$ is obvious and the challenge consists in proving the 
converse inclusion. We prove actually a stronger statement:
\emph{$\vdash_{G_n}I(s_1,\dots,s_n)$ is derivable for every $s_1 \cdots s_n$
in $O_n$.} With this statement, the desired inclusion is immediate: for $w\in O_n$, set $s_1=w$ and 
$s_j=\varepsilon$ for all $j\geq 2$ and apply rule~\eqref{rule:init}.

A tuple $(s_1,\dots,s_n)$ is  \emph{reducible} if there are at least two compatible letters among the endpoints of the $s_j$'s; otherwise the tuple is
\emph{irreducible}. 
Repeated applications of rules of the form~\eqref{rule:const} allow to derive reducible tuples from irreducible ones. Irreducible tuples are dealt with the following ``decomposition lemma,'' which shows how a rule of the form~\eqref{rule:binary} is applied to derive an irreducible tuple from smaller elements in $O_n$. While rule~\eqref{rule:empty} provides the base case for the induction, this lemma is the main technical result towards the proof of \Cref{thm:o_n_n_mcfl}. It relies on \Cref{thm:multi-neckl}, a result stated and proved in \Cref{sec:necklace}, and formulated with the terminology of the necklace splitting theorem.

\begin{lemma}[Decomposition lemma]
	\label{lem:decomposition}
	Consider an irreducible tuple $(s_1,\dots, s_n)$ in $(\Sigma_n^{+})^n$, where each $s_j$ is of length at least $2$.
	If $s_1\cdots s_n$ belongs to $O_n$, then there exist two tuples
	$(u_1, u_3, \dots,u_{2n-1})$ and $(u_2, u_4, \dots,u_{2n})$ in $(\Sigma_n^{\ast})^n$ and integers \(0 = k_0 \leq k_1 \leq \cdots \leq k_n = 2n\) such that
	\begin{enumerate}[label=\textup{(\Roman*)}]
		\item \label{prop:nonempty-O_n}$u_1 u_3 \cdots u_{2n-1}$ and $u_2 u_4 \cdots u_{2n}$ are both nonempty elements of $O_n$, and
		\item\label{prop:interlacing} $s_j = u_{k_{j -1} + 1} ~ u_{k_{j -1} + 2} \cdots u_{k_j}$ for each \(j \in [n]\).
	\end{enumerate}
	In particular, \(s_1 \cdots s_n = u_1 u_2 \cdots u_{2n}\).
\end{lemma}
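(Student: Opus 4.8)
The plan is to recast the conclusion as a necklace-splitting problem and hand it to \Cref{thm:multi-neckl}. First I would encode letters as vectors, sending $a_i$ to $\e_i$ and $\bar a_i$ to $-\e_i$ and extending additively to words; a word then lies in $O_n$ exactly when the sum of the vectors of its letters equals $\zero$. Under this encoding $w=s_1\cdots s_n$ becomes an open necklace with $n$ bead types whose total over each type vanishes, presented as the concatenation of the $n$ consecutive segments $s_1,\dots,s_n$. What I am after is a way to cut these segments into $2n$ pieces, list them left to right as $u_1,\dots,u_{2n}$, and colour them alternately so that each colour's beads sum to $\zero$ in every coordinate; this is precisely the requirement that $u_1u_3\cdots u_{2n-1}$ and $u_2u_4\cdots u_{2n}$ lie in $O_n$.

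The bookkeeping is arranged so that \ref{prop:interlacing} holds for free. Viewing $s_1,\dots,s_n$ as $n$ separate necklaces and making $n$ cuts in total among them produces $2n$ pieces, since each cut splits one current piece into two; because the necklaces are separated at the segment boundaries, no piece straddles the boundary between $s_j$ and $s_{j+1}$. Listing the pieces left to right as $u_1,\dots,u_{2n}$ and setting $k_j$ to be the number of them contained in $s_1\cdots s_j$ then yields $0=k_0\le k_1\le\cdots\le k_n=2n$ together with $s_j=u_{k_{j-1}+1}\cdots u_{k_j}$ and $s_1\cdots s_n=u_1\cdots u_{2n}$, all by construction. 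The genuinely nontrivial input is the existence of $n$ cuts realising a balanced alternating colouring --- odd-indexed pieces to one thief, even-indexed to the other, each thief's beads summing to $\zero$ --- which is exactly the statement of \Cref{thm:multi-neckl}; its proof is topological, resting on a Borsuk--Ulam argument carried out in \Cref{sec:necklace}. Invoking it yields at once the $O_n$ membership asserted in \ref{prop:nonempty-O_n}. The passage from continuous cuts to cuts at letter boundaries is the usual discretisation step of necklace splitting and is routine.

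The step I expect to be the crux is the \emph{nonemptiness} half of \ref{prop:nonempty-O_n}: a balanced split is useless if it is the degenerate one giving all the content to a single colour. This is exactly where the hypotheses that $(s_1,\dots,s_n)$ is irreducible and that each $|s_j|\ge 2$ must be spent, forcing the boundary data of the splitting problem to be non-degenerate so that \Cref{thm:multi-neckl} returns a genuinely two-sided split. The mechanism is already transparent for $n=1$: let $P(t)\in\Z$ be the number of occurrences of $a_1$ minus those of $\bar a_1$ among the first $t$ letters of $s_1$, so $P(0)=P(|s_1|)=0$ and $P$ moves by $\pm1$ at each step. Irreducibility forces the first and last letters of $s_1$ to be non-compatible, whence $P(1)$ and $P(|s_1|-1)$ have opposite signs and $P(t)=0$ for some $t$ strictly between them; this $t$ splits $s_1$ into two nonempty words of $O_1$. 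I would then verify that the non-degeneracy hypothesis needed by \Cref{thm:multi-neckl} is the higher-dimensional analogue of this sign-crossing condition and that irreducibility together with $|s_j|\ge 2$ supplies it; this verification is the last and most delicate part of the argument.
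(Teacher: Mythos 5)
Your overall route is the paper's route: handle $n=1$ by a sign-crossing argument (your $P(t)$ argument is exactly the paper's proof of that case) and reduce $n\geq 2$ to the necklace-splitting variant, \Cref{thm:multi-neckl}. But there is a genuine gap in the reduction: you assert that ``$n$ cuts realising a balanced \emph{alternating} colouring --- odd-indexed pieces to one thief, even-indexed to the other --- is exactly the statement of \Cref{thm:multi-neckl}.'' It is not. The statement of \Cref{thm:multi-neckl} only provides an \emph{unordered} partition of the subnecklaces into two balanced, nonempty parts of at most $n$ pieces each; it says nothing about the parts alternating in left-to-right order. The paper is explicit about this: the theorem ``almost implies'' \Cref{lem:decomposition} but \emph{misses property~\ref{prop:interlacing}}, and only the construction inside its proof yields the lemma. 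The alternation cannot be recovered after the fact from the statement alone: with exactly $n$ cuts there are $2n$ pieces and no spare slots, so the lemma's format forces the colouring to alternate perfectly across all pieces, including across the boundaries between $s_j$ and $s_{j+1}$. For instance, with $n=2$, a partition assigning $\{v_1,v_4\}$ to one thief and $\{v_2,v_3\}$ to the other (where $s_1=v_1v_2$, $s_2=v_3v_4$) satisfies everything \Cref{thm:multi-neckl} promises, yet cannot be written in the form $(u_1,u_3)$, $(u_2,u_4)$ with $s_1=u_1u_2$, $s_2=u_3u_4$. So invoking the theorem as a black box does not close the proof.

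The fix is to descend into the proof of \Cref{thm:multi-neckl}, where the alternation is engineered rather than deduced: one removes the first bead of $s_1$ and the last bead of $s_n$, forms the modified big necklace $s'=s'_1\,\bar s_2\,s_3\,\bar s_4\cdots$ in which every second necklace has all bead signs flipped, applies the signed splitting result (\Cref{prop}, including its clause letting one \emph{choose} which thief receives each extra bead, which is needed for the parity bookkeeping), and then reinterprets the cuts in $s$ together with the endpoints of the $s_j$'s. The sign flips are precisely what converts thief-alternation along $s'$ into index-parity alternation of the $u_\ell$ along $s$. Relatedly, you misdiagnose the crux: nonemptiness in \ref{prop:nonempty-O_n} is the cheap part (the two removed beads anchor $u_1$ and $u_{2n}$, and \Cref{thm:multi-neckl} already asserts nonemptiness without any irreducibility hypothesis), while irreducibility is spent only to exclude the easy reducible solution --- cutting off two compatible endpoint beads --- which is balanced but destroys the interlacing format. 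The delicate point is the alternating structure itself, which your proposal assumes rather than proves.
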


%

\begin{proof}
We distinguish the cases $n=1$ and $n\geq 2$.

We deal first with the case $n=1$. In that case, $\Sigma_n=\{a_1,\bar a_1\}$. As $s_1$ is irreducible, without loss of generality,
we may assume that $s_1 = a_1wa_1$. We consider prefixes $u'$ of $w$ of increasing length, from $|u'|=0$ to $|u'|=|w|$.
Since  $s_1$ belongs to $O_1$, the quantity $|a_1u'|_{a_1}-|a_1u'|_{\bar a_1}$ starts with the value $1$ 
and finishes with the value $-1$, and changes by steps of one unit.
There is therefore a prefix $u'$ of $w$ such that $a_1u'$ belongs to $O_1$. Setting $u_1=a_1u'$ and $u_2\in\Sigma_1^+$ such that $s_1=u_1u_2$
makes the job.


We deal now with the case $n\geq 2$. The proof of \Cref{thm:multi-neckl} builds explicitly $u_{\ell}$'s
satisfying the desired properties: property~\ref{prop:nonempty-O_n} is a consequence of the fact that each of $A$ and $B$ are
balanced; property~\ref{prop:interlacing} is a consequence of the fact that the endpoints of the $s_j$'s form cuts.
\end{proof}
From this the inclusion follows easily.

\begin{proposition}
  \label{prop:O_n_included_in_L_G_n}
  The judgment $\vdash_{G_n}I(s_1,\dots,s_n)$ is derivable for every $s_1\cdots s_n$
in $O_n$. In particular, we have $O_n\subseteq\mathcal{L}(G_n)$.
\end{proposition}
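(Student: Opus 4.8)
The plan is to prove the stronger statement that $\vdash_{G_n} I(s_1,\dots,s_n)$ is derivable for every tuple $(s_1,\dots,s_n) \in (\Sigma_n^\ast)^n$ with $s_1\cdots s_n \in O_n$, by strong induction on the total length $N = |s_1| + \cdots + |s_n|$. The proposition then follows at once by setting $s_1 = w$ and $s_j = \varepsilon$ for $j \geq 2$ and invoking rule~\eqref{rule:init}, exactly as the text indicates. The base case $N = 0$ is immediate: here every $s_j = \varepsilon$, and $\vdash_{G_n} I(\varepsilon,\dots,\varepsilon)$ is derivable directly by rule~\eqref{rule:empty}.

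For the inductive step, fix a tuple with $N \geq 1$ and assume derivability for all tuples of strictly smaller total length whose concatenation lies in $O_n$. I would split into two cases according to whether the tuple is reducible or irreducible. First I would reduce to the case where no $s_j$ is empty and each has length at least $2$: if some $s_j = \varepsilon$, the tuple is essentially a tuple for a grammar with fewer nonempty components, but more carefully, I would treat empty components and length-$1$ components via the \emph{reducible} case. Indeed, if the tuple is \emph{reducible}, there are two compatible letters $\alpha, \bar\alpha$ among the endpoints of the $s_j$'s; stripping them off yields a shorter tuple $(s_1',\dots,s_n')$ whose concatenation is still in $O_n$ (removing a compatible pair preserves the balance $|w|_{a_i} = |w|_{\bar a_i}$). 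By the induction hypothesis $\vdash_{G_n} I(s_1',\dots,s_n')$ is derivable, and a single application of the appropriate rule~\eqref{rule:const} — choosing the branch $k \neq \ell$ or $k = \ell$ according to whether the two compatible endpoints lie on different components or on the two ends of the same component — recovers $\vdash_{G_n} I(s_1,\dots,s_n)$. A short check confirms that each of the length-$1$ or empty corner cases is reducible or handled by stripping, so the genuinely new work lies entirely in the irreducible case.

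If the tuple is \emph{irreducible}, then in particular no two endpoints are compatible, which forces every $s_j$ to have length at least $2$ (a component of length $0$ or $1$ would, together with the global balance condition, produce a compatible pair of endpoints; this is the crux of reducing to the hypotheses of the lemma, and I would verify it carefully). We are then exactly in the situation of \Cref{lem:decomposition}: the tuple $(s_1,\dots,s_n)$ is irreducible, lies in $(\Sigma_n^+)^n$ with each component of length at least $2$, and $s_1\cdots s_n \in O_n$. Applying the decomposition lemma produces words $u_1,\dots,u_{2n}$ and indices $0 = k_0 \leq \cdots \leq k_n = 2n$ such that, setting $x_j$-tuple $= (u_1, u_3, \dots, u_{2n-1})$ and $y_j$-tuple $= (u_2, u_4, \dots, u_{2n})$, both concatenations $u_1 u_3 \cdots u_{2n-1}$ and $u_2 u_4 \cdots u_{2n}$ are nonempty elements of $O_n$, and $s_j = u_{k_{j-1}+1}\cdots u_{k_j}$.

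The two tuples produced by the lemma each have strictly smaller total length than $(s_1,\dots,s_n)$, since each is nonempty (hence has length $\geq 1$) while the two together partition the letters of $s_1\cdots s_n$; because $N \geq 2$ in the irreducible case, each piece is strictly shorter than the whole. The induction hypothesis therefore applies to both, giving derivations of $\vdash_{G_n} I(u_1, u_3, \dots, u_{2n-1})$ and $\vdash_{G_n} I(u_2, u_4, \dots, u_{2n})$. It remains to see that the factorization $s_j = u_{k_{j-1}+1}\cdots u_{k_j}$ from property~\ref{prop:interlacing} is precisely the shape permitted by a rule of the form~\eqref{rule:binary}: writing the first tuple's entries as $x_1,\dots,x_n$ and the second's as $y_1,\dots,y_n$, the string $u_1 u_2 \cdots u_{2n}$ equals $x_1 y_1 x_2 y_2 \cdots x_n y_n$, and the monotone cut sequence $(k_j)$ partitions this product into the $n$ blocks $w_1,\dots,w_n$ with $w_j = s_j$. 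Thus $(w_1,\dots,w_n)$ is a valid factorization of $x_1 y_1 \cdots x_n y_n$ as required by rule~\eqref{rule:binary}, and one application of that rule derives $\vdash_{G_n} I(s_1,\dots,s_n)$, completing the induction. I expect the main obstacle to be not the invocation of \Cref{lem:decomposition} itself but the bookkeeping that justifies reducing every case to either rule~\eqref{rule:empty}, a single rule~\eqref{rule:const}, or a single rule~\eqref{rule:binary}: in particular, verifying that irreducibility forces all components to have length at least $2$, and that the cut indices $k_j$ respect the interleaving pattern $x_1 y_1 \cdots x_n y_n$ demanded by rule~\eqref{rule:binary}.
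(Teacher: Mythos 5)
Your proof has the same skeleton as the paper's (induction; reducible tuples via rule~\eqref{rule:const}; irreducible ones via \Cref{lem:decomposition} and a rule of the form~\eqref{rule:binary}), but it has a genuine gap at precisely the step you flagged as the crux: it is \emph{not} true that irreducibility forces every component to have length at least $2$, nor can empty or length-$1$ components be funneled into the reducible case. For $n=2$, the tuple $(s_1,s_2)=(a_1,\; a_2\bar a_1\bar a_2\bar a_2 a_2)$ satisfies $s_1s_2\in O_2$, and its endpoint letters are $a_1,a_2,a_2$, no two of which are compatible, so it is irreducible even though $|s_1|=1$; similarly $(\varepsilon,\; a_1\bar a_1\bar a_2 a_2)$ is irreducible with an empty component, since the empty word contributes no endpoints at all. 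For such tuples \Cref{lem:decomposition} simply does not apply (its hypotheses $s_j\in\Sigma_n^{+}$ and $|s_j|\geq 2$ are violated), and your induction has no case that covers them.

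The paper closes exactly this hole with two additional, genuinely different constructions. For an irreducible tuple with a length-$1$ component $s_j=a_1$: locate $\bar a_1$ inside some $s_k=v_1\bar a_1 v_2$, derive $\vdash_{G_n}I(t_1,\dots,t_n)$ (where $t_j=a_1$, $t_k=\bar a_1$, and the rest are $\varepsilon$) using rules~\eqref{rule:empty} and~\eqref{rule:const}, derive by induction the tuple obtained by deleting $s_j$ and splitting $s_k$ into $v_1,v_2$, and recombine the two with a carefully chosen rule~\eqref{rule:binary}. For an irreducible tuple with an empty component adjacent to a component of length at least $2$: split that neighbor into two nonempty factors and recurse. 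Note that this last step does not shrink the total length at all---it only decreases the number $e$ of components equal to $\varepsilon$---which is why the paper's induction is on the lexicographic pair $(|s_1\cdots s_n|,e)$ rather than, as in your proposal, on the total length alone. To repair your argument you therefore need both the extra case analysis for irreducible tuples with short components and a refined induction measure.
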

\begin{proof}
As mentioned above, the second part of the statement is a direct consequence of the first part. We proceed by induction
  on the pairs $(|s_1\cdots s_n|,e)$ ordered lexicographically, where $e$ is the number of $s_j$ equal to $\varepsilon$.


Suppose first that $(s_1,\dots,s_n)$ is reducible. A rule of the form~\eqref{rule:const} shows that we can derive 
the judgment from another judgment $\vdash_{G_n}I(s'_1,\dots,s'_n)$, with $|s'_1\cdots s'_n|<|s_1\cdots s_n|$. The induction hypothesis provides the conclusion.

Suppose now that $(s_1,\dots,s_n)$ is irreducible. Four cases are in order, distinguished according to the possible lengths of the $s_j$'s.

The first case is when at least one $s_j$ is of length $1$. Without loss of
  generality we may assume that $s_j = a_1$. As $s_1\cdots s_n$ is in $O_n$,
  there is a $k$ such that $s_k = v_1\bar a_1 v_2$ for some $v_1$ and $v_2$ in
  $\Sigma_n^\ast$. The other case being similar, we suppose that $j<k$.
Define $t_j = a_1$, $t_k = \bar a_1$, and $t_{\ell} = \varepsilon$ for $\ell\neq j,k$; define also 
$(u_1,\dots,u_n) = (s_1,\dots, s_{j-1},s_{j+1}, \dots ,s_{k-1},v_1,v_2, s_{k+1}, \dots
    , s_n)$. Using rule~\eqref{rule:empty} and a rule of the form~\eqref{rule:const}, 
we get that $\vdash_{G_n}I(t_1,\dots, t_n)$ is derivable. The judgment
$\vdash_{G_n}I(u_1,\dots,u_n)$ is derivable by induction. Then using a rule of
the form~\eqref{rule:binary} shows that $\vdash_{G_n}I(s_1,\dots, s_n)$ is derivable.
More precisely, we instantiate each variable $x_\ell$ with $t_\ell$ and each variable $y_\ell$ with $u_\ell$ in the following rule
  \begin{multline*}
  I(x_1y_1,\dots,x_{j-1}y_{j-1},x_{j}, y_j,\dots ,x_{k-2}y_{k-2}x_{k-1},y_{k-1}x_{k}y_k,\dots, x_n y_n)\Rightarrow \\ I(x_1,\dots, x_n), I(y_1,\dots,y_n ) \enspace .
  \end{multline*}

  The second case is when all $s_j$ are equal to $\varepsilon$. The conclusion follows from an application of rule~\eqref{rule:empty}.
  
  The third case is when some $s_j$ but not all are equal to $\varepsilon$ and no $s_j$ is of length $1$. There is
then an $j \in [n-1]$ such that either $s_j=\varepsilon$ and $|s_{j+1}|>1$, or $|s_j|> 1$ and
  $s_{j+1}=\varepsilon$. By symmetry, we suppose that $s_j=\varepsilon$
  and $|s_{j+1}|>1$. As $|s_{j+1}|>1$, we have that $s_{j+1} = v_1v_2$ with
  $v_1$ and $v_2$ in $\Sigma^+$. Define $(s'_1,\dots,s'_n) = (s_1,\dots,s_{j-1},v_1,v_2, s_{j+2}, \dots,s_n)$.
  The judgment
  $\vdash_{G_n}I(s'_1,\dots,s'_n)$ is derivable by induction (we have a smaller $e$). 
   The judgment  $\vdash_{G_n}I(\varepsilon,\dots,\varepsilon)$ is derivable from an application of rule~\eqref{rule:empty}.
Then using a rule of the form~\eqref{rule:binary} shows that $\vdash_{G_n}I(s_1,\dots, s_n)$ is derivable.
More precisely, we instantiate each variable $x_\ell$ with $s'_\ell$ and each variable $y_\ell$ with $\varepsilon$ in the following rule:
  \[
  I(x_1y_1,\dots, x_{j-1}y_{j-1},\varepsilon,x_i
    y_jx_{j+1}y_{j+1},\dots, x_ny_n)\Rightarrow I(x_1,\dots,
    x_n), I(y_1,\dots,y_n ) \enspace .
    \]
    
    The fourth case satisfies the conditions of \Cref{lem:decomposition}
    (``decomposition lemma''), which shows that $\vdash_{G_n}I(s_1,\dots, s_n)$
    is derivable by an application of a rule of the form~\eqref{rule:binary} and by induction.
\end{proof}

From this we can derive our main theorem.

\begin{theorem}
  \label{thm:o_n_n_mcfl} The language $O_n$ is an $n$-MCFL.
\end{theorem}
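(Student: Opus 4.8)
The plan is to prove \Cref{thm:o_n_n_mcfl} by combining two easy inclusions into a single chain. The theorem asserts that $O_n$ is an $n$-MCFL, and since $G_n$ is an $n$-MCFG by construction (its only nonterminals $S$ and $I$ have rank $1$ and $n$ respectively, so its dimension is exactly $n$), it suffices to establish $\mathcal{L}(G_n) = O_n$. The inclusion $\mathcal{L}(G_n) \subseteq O_n$ is the ``obvious'' direction already flagged in the text: I would verify by a straightforward induction on derivations that every judgment $\vdash_{G_n} I(s_1,\dots,s_n)$ satisfies $|s_1\cdots s_n|_{a_i} = |s_1\cdots s_n|_{\bar a_i}$ for all $i\in[n]$. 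This is preserved by each rule type, since rule~\eqref{rule:empty} yields the empty word (trivially balanced), rule~\eqref{rule:binary} merely reshuffles and concatenates the component strings without altering letter counts, and rule~\eqref{rule:const} always inserts a compatible pair $\alpha,\bar\alpha$, which contributes one $a_i$ and one $\bar a_i$ and hence keeps every count difference at zero. Rule~\eqref{rule:init} then passes this to $S$, so $\mathcal{L}(G_n)\subseteq O_n$.

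The reverse inclusion $O_n \subseteq \mathcal{L}(G_n)$ is exactly the content of \Cref{prop:O_n_included_in_L_G_n}, which I am entitled to assume. Thus my proof is essentially a one-line assembly: by \Cref{prop:O_n_included_in_L_G_n} we have $O_n\subseteq\mathcal{L}(G_n)$, combined with the forward inclusion this gives $\mathcal{L}(G_n)=O_n$, and since $G_n$ has dimension $n$ we conclude that $O_n$ is an $n$-MCFL. I would also remark that the lower-bound matching ``$n$-MCFL'' claim (that $O_n$ cannot be an MCFL of dimension strictly less than $n$) is already credited to Nederhof in the introduction via a pumping argument, so it need not be reproved here to justify the stated theorem, which only asserts membership in the $n$-MCFL class.

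The main obstacle is not located in this theorem at all but upstream, in \Cref{prop:O_n_included_in_L_G_n} and ultimately in \Cref{lem:decomposition} (the decomposition lemma). The genuinely hard work is the fourth case of the proposition's induction, where the tuple is irreducible with all components of length at least $2$: there one must exhibit a splitting of $s_1\cdots s_n$ into two nonempty words of $O_n$ that respects the endpoint structure dictated by rule~\eqref{rule:binary}, and this is precisely what the decomposition lemma supplies via the necklace-splitting machinery of \Cref{thm:multi-neckl}. Granting those results, the final theorem itself is immediate and requires no topological input; the only care needed is to confirm rigorously that the forward inclusion holds and that the dimension of $G_n$ is $n$ rather than something larger, both of which follow directly from the grammar's definition.
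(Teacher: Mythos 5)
Your proposal is correct and follows exactly the paper's route: the theorem is obtained by combining the obvious inclusion $\mathcal{L}(G_n)\subseteq O_n$ (which the paper also dispatches as immediate, by the same rule-by-rule balance-preservation observation you sketch) with \Cref{prop:O_n_included_in_L_G_n}, noting that $G_n$ has dimension $n$. Your added remark that the matching lower bound is Nederhof's separate pumping argument and is not part of the theorem's claim is also consistent with how the paper treats it.
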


\begin{remark}
	Theorem~\ref{thm:multi-neckl} actually implies a version of 
	\Cref{lem:decomposition} that also holds for reducible tuples albeit only if 
	$n$ is at least $2$ and if we permit to decompose $(s_1, \ldots, s_n)$ in more versatile tuples.
	This corresponds to a slightly more liberal definition of the rules~\eqref{rule:binary}.
	More precisely, we may add to the rules~\eqref{rule:binary} for each \(j \in [n]\) the rule:
	\begin{equation*}\label{rule:binary_liberal}
	I(x_1 y_1, \ldots, x_{j-1}y_{j-1}, \text{\boldmath\bfseries{$y_j x_j$}},x_{j+1} y_{j+1}, \ldots, x_ny_n) \Rightarrow I(x_1, \ldots, x_n),\, I(y_1, \ldots, y_n) \enspace. \tag{\textasteriskcentered}
	\end{equation*}
	
	We have chosen to exclude rules~\eqref{rule:binary_liberal} 
	to emphasize the surprising simplicity of the rules~\eqref{rule:binary} 
	which allow to decompose every irreducible tuple.
	Moreover, we want to contrast the grammar we obtain with the one that
	Nederhof~\cite{Nederhof-2017-free-word-orders-and-MCFLs} conjectures to
	capture $O_n$. Nederhof proposes binary rules of the following form:
	\[A(w_1,\dots,w_n)\Rightarrow A(x_1,\dots,x_n),\,A(y_1,\dots,y_n)\] where
	$w_1\cdots w_n$ is obtained by shuffling the words $x_1\cdots x_n$ and
	$y_1\cdots y_n$, i.e., $|w_1\cdots w_n|=2n$, 
	removing all occurrences of \(y_j\)'s from $w_1 \cdots w_n$ yields $x_1 \cdots x_n$ and, 
	analogously, 
	removing all occurrences of \(x_j\)'s from $w_1 \cdots w_n$ yields $y_1 \cdots y_n$. 
	Furthermore, a $w_k$ may not contain 
	an occurrence of $x_jx_{j+1}$ or of $y_{j}y_{j+1}$ for some $j \in [n-1]$.
	The rules~\eqref{rule:binary} may be obtained from Nederhof's by forbidding the
	occurrence of $x_jx_{j+1}$ and of $y_{j}y_{j+1}$ not only in the $w_k$'s but
	rather in the combined word $w_1\cdots w_n$. 
	Note however that this additional restriction implies that rules~\eqref{rule:const} 
	need to allow the removal of compatible letters at arbitrary endpoints of the words of a tuple.
	The corresponding rules of Nederhof's grammar are less liberal.
	Were we to remove rules~\eqref{rule:const} and treat the elimination of compatible letters as
	in~\cite{Nederhof-2017-free-word-orders-and-MCFLs}, then we would need to add
	the rules~\eqref{rule:binary_liberal}. 
	Clearly, these rules form a strict subset of those proposed by Nederhof as 
	$w_k$'s are of length $2$ and at most one occurrence of $x_jx_{j+1}$ 
	is allowed in $w_1\cdots w_n$.
	In this sense, our grammar is simpler than Nederhof's.
	Notably, also Nederhof \cite[Sec.~5.3]{Nederhof-2017-free-word-orders-and-MCFLs} conjectures for \(O_3\)
	that some of the rules of his grammar are redundant.
\end{remark}


\section{Necklace splitting and proof of the decomposition lemma}\label{sec:necklace}

In this section, we prove a combinatorial theorem in the tradition of the
necklace splitting problem.  The theorem almost implies \Cref{lem:decomposition}, but while it does
not require irreducibility, it misses the property~\ref{prop:interlacing}. Its
proof however gives the construction for the case $n>1$ of
\Cref{lem:decomposition}.

This combinatorial theorem is formulated without
the terminology of languages and grammars so as to make it understandable easily
without background in this area. 
For readers who are more familiar to manipulating words, the vocabulary of the
necklace problem translates easily to that of words: necklaces become words and
beads become letters.

In the traditional version of the necklace splitting theorem, there is an open necklace with beads of $n$ different types, and an even number of beads of each type. The necklace splitting theorem ensures that such a necklace can always be split between two thieves with no more than $n$ cuts so that each thief gets the same amount of each type. The cuts have to leave the beads untouched. Here, we keep the same setting, except that a bead can be either ``positive'' or ``negative.'' The {\em amount} of a type in a necklace or in a collection of necklaces is the number of positive beads of this type minus the number of negative beads of this type. A necklace or a collection of necklaces is {\em balanced} if the amount of each type is zero.

\begin{theorem}\label{thm:multi-neckl}
Consider a collection of $n$ open necklaces with positive and negative beads. Suppose that there are $n\geq 2$ types of beads and that each of the necklaces has at least two beads. If the collection is balanced, then there is a way to cut the necklaces using at most $n$ cuts in total and partition the subnecklaces into two parts so that each part is balanced, gets at least one bead (and thus at least two), and is formed by at most $n$ subnecklaces.
\end{theorem}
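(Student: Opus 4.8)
The plan is to prove \Cref{thm:multi-neckl} by a Borsuk--Ulam argument, following the topological tradition of necklace splitting but adapted so that the two parts come out as \emph{alternating} runs of subnecklaces; this is what will simultaneously give the bounds ``at most $n$ cuts'' and ``at most $n$ subnecklaces per part.'' First I would pass to a continuous model: lay the $n$ necklaces consecutively on an interval $[0,N]$, with the $n-1$ internal necklace endpoints at fixed points $0 < B_1 < \cdots < B_{n-1} < N$, and replace each bead by a small subinterval carrying the signed measure $\mu_i$ of its type (density $+1$ on positive beads of type $i$, $-1$ on negative ones, and $0$ off type $i$). Being balanced then reads $\mu_i([0,N]) = 0$ for every $i \in [n]$. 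A solution to the continuous problem is a set of cut points that, \emph{together with the fixed boundaries}, splits $[0,N]$ into $2n$ consecutive pieces $P_1, \ldots, P_{2n}$ (some possibly empty) which I color alternately, $P_1, P_3, \ldots \to A$ and $P_2, P_4, \ldots \to B$, in such a way that $\mu_i(A) = \mu_i(B)$ for all $i$; since the total is zero this forces $\mu_i(A) = \mu_i(B) = 0$, so both parts are balanced.

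The core step is to realize these alternating, boundary-respecting splittings as a sphere carrying a free $\mathbb{Z}/2$-action and to define an equivariant (odd) map to $\R^n$. Because the coloring is forced to alternate, every adjacency between consecutive pieces is a color change; of the $2n-1$ such adjacencies, exactly the $n-1$ fixed necklace endpoints are ``free,'' so the remaining at most $n$ are genuine cuts, and each color class consists of exactly $n$ pieces. The natural involution is the global color swap $A \leftrightarrow B$, which I would model by shifting the alternating pattern by one empty piece; allowing empty pieces is precisely what absorbs the otherwise-discrete choice of how many pieces fall in each necklace and keeps the configuration space connected. Encoding the $2n$ piece-lengths through squared coordinates, as in the classical single-necklace construction, subject to the $n-1$ boundary constraints, should cut the parameter space down to a sphere $S^n$ on which the swap acts antipodally, and the map $F \colon S^n \to \R^n$, $F_i = \mu_i(A) - \mu_i(B)$, is then continuous and odd. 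Borsuk--Ulam yields a zero, i.e.\ a balanced alternating split using at most $n$ cuts with at most $n$ subnecklaces in each part.

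It then remains to (i) discretize and (ii) secure that each part receives at least one bead. For (i) I would use the standard rounding argument: a continuous zero can be perturbed so that all cut points land strictly between beads while the amounts of each type change only in integer steps and stay balanced, yielding an honest cut of the discrete necklaces. For (ii), since the whole collection is balanced, the degenerate split putting everything on one side also solves $F = 0$ and must be excluded; here I would invoke the hypotheses $n \geq 2$ and that each necklace has at least two beads, through a genericity argument, to guarantee that the split produced is nondegenerate, so that each part carries at least one bead and hence, by balance, at least two.

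The hard part will be the second paragraph: setting up the right topological model. The real difficulty is that the ``number of pieces per necklace'' is a discrete datum, so the configuration space of alternating boundary-respecting splittings is a priori a union of cells rather than a single sphere. Making the global color-swap act freely across these cells, checking that the squared-length coordinates together with the $n-1$ boundary constraints glue the cells into an $S^n$ with antipodal action, and verifying that $F$ remains odd and continuous across cell boundaries (where pieces collapse to length zero) is where the argument must be made precise. Once this identification with $(S^n,\text{antipode})$ is in place, the conclusion is an immediate application of Borsuk--Ulam.
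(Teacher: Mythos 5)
Your plan has two genuine gaps, and they sit exactly where the theorem's difficulty lives; the paper's proof is organized precisely to avoid both.

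The first gap is the construction of the sphere, which you yourself defer as ``the hard part.'' In the classical squared-coordinates model, the sign of each coordinate is a \emph{free, independent} color choice for the corresponding piece, and that freedom is exactly what makes the configuration space all of $S^n$. In your setting the colors are \emph{forced} to alternate and to change at the $n-1$ fixed necklace boundaries, so the signs are not free: the natural parameter space is the simplex of cut positions times a two-element set of patterns, i.e.\ a priori two disjoint cells swapped by the color-swap involution --- a space of $\mathbb{Z}_2$-index $0$, about which Borsuk--Ulam says nothing. The identifications along configurations with empty pieces, which you invoke to reconnect the space, are genuinely delicate: a given colored partition can be realizable in one pattern but not in the other (this happens whenever an empty piece is used to ``absorb'' a boundary), and you give no argument that the glued space is a sphere, nor even that its $\mathbb{Z}_2$-index reaches $n$. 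Nothing in the squared-coordinate encoding imposes the alternation constraint, so the assertion that the parameter space ``should cut down to $S^n$'' is unsupported, and it is the whole proof.

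The second gap is independent of the first: even granting an index-$n$ space, Borsuk--Ulam only gives \emph{existence} of a zero of $F$, and here that existence is vacuous. Since the collection is balanced, the degenerate configurations (all cuts placed on necklace endpoints so that one part is empty) lie in your space and satisfy $F=0$; the theorem is exactly the claim that a \emph{nondegenerate} zero exists. No genericity argument can bridge this: $F$ is determined by the necklaces and vanishes identically on the degenerate locus, and excising that locus equivariantly drops the index (already $S^n$ minus one free orbit has index $n-1$), after which a map to $\mathbb{R}^n$ need not vanish. The single necklace $a_1\bar a_1$ (outside the hypotheses) illustrates that forced-alternation splittings can admit \emph{only} degenerate balanced solutions, so the hypotheses $n\geq 2$ and ``at least two beads per necklace'' must enter the construction structurally, not through an unspecified perturbation. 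This is how the paper proceeds: it removes the leftmost bead of $s_1$ and the rightmost bead of $s_n$, flips the signs of the beads in the even-indexed necklaces, concatenates into a single necklace $s'$, and applies \Cref{prop} (Hobby--Rice, with control over which thief gets the extra bead of each odd type). The sign flips make the automatic odd/even alternation of the Hobby--Rice intervals implement the forced color change at necklace boundaries, and the two removed beads pin $u_1$ in one part and $u_{2n}$ in the other, which yields nonemptiness for structural reasons. The paper's \Cref{sec:combinatorial-proofs} corroborates that a direct topological route needs the stronger Ky Fan lemma (\Cref{lemma:ky-fan}), with its alternating chains, rather than plain Borsuk--Ulam or Tucker; your plan would need an ingredient of that strength to close both gaps.
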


The connection to \Cref{lem:decomposition} is as follows:

\begin{itemize}
\item The $n$ types of beads are the numbers $1, \dots,n$, the letter $a_i$
  representing a \emph{positive} bead of type $i$ and the letter $\bar{a}_i$
  representing a \emph{negative} bead of type $i$.
\item The $n$ open necklaces correspond to the words $s_1, \dots, s_n$.
\item The two parts of the at most $n$ subnecklaces are the tuples
  $(u_1,\dots,u_{2n-1})$ and $(u_2,\dots, u_{2n})$ in the lemma, which, as in the theorem,
  are required to be balanced and to contain at least one letter each.
\end{itemize}

The proof of \Cref{thm:multi-neckl} relies on the following proposition,
which is actually the traditional
necklace splitting theorem extended to negative beads (the relaxation of the parity condition of the number
of beads is standard; see, e.g.,~\cite[Section 5.1]{alon2006algorithmic}).

\begin{proposition}\label{prop}
Consider a necklace with positive and negative beads. Suppose that there are $n\geq 1$ types of beads. Then the necklace can be split between
two thieves with no more than $n$ cuts so that the amount of beads received by
the thieves differ by at most one unit for each type. It is moreover possible to choose
which thief receives an extra bead for each type which requires so.
\end{proposition}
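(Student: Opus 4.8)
The plan is to prove this by the classical Borsuk--Ulam route underlying the Alon--West necklace splitting theorem, adapting it to signed beads and to the relaxed parity with a prescribed surplus. First I would model the necklace as the interval $[0,L]$, where $L$ is the total number of beads and the $p$-th bead occupies $[p-1,p]$. For each type $i \in [n]$ I attach a piecewise-constant signed density $g_i\colon [0,L]\to\{-1,0,+1\}$, equal to $+1$ on positive type-$i$ beads, $-1$ on negative ones, and $0$ elsewhere, so that $\int_0^L g_i = m_i$, the amount of type $i$. The strategy is to first establish a \emph{continuous} fair splitting: a partition of $[0,L]$ into consecutive intervals by at most $n$ cut points, two-colored between the thieves, such that each $\int g_i$ is split exactly in half. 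The integrality relaxation and the control of the recipient are then recovered by discretizing this continuous solution and by a phantom-bead augmentation.

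The core is a single application of Borsuk--Ulam. I parametrize the sphere $S^n=\{z\in\R^{n+1}:\sum_k z_k^2=1\}$ and associate to $z$ the partition of $[0,L]$ into consecutive intervals $I_1,\dots,I_{n+1}$ of lengths $Lz_1^2,\dots,Lz_{n+1}^2$, assigning $I_k$ to the first thief when $z_k>0$ and to the second when $z_k<0$. I then set $f\colon S^n\to\R^n$, $f_i(z)=\sum_{k=1}^{n+1}\sign(z_k)\int_{I_k}g_i$. This map is continuous, since the sign jump at $z_k=0$ is harmless because $I_k$ degenerates there and contributes $0$, and it is odd, because replacing $z$ by $-z$ leaves the lengths $z_k^2$ unchanged while flipping every sign and hence swapping the two thieves. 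Borsuk--Ulam yields $z^\ast$ with $f(z^\ast)=0$, which says precisely that each thief receives the same amount $m_i/2$ of every type, using the at most $n$ cut points separating the nondegenerate intervals. This already settles every type of even amount, with cuts possibly interior to beads.

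Next I would discretize by moving the cuts to bead boundaries. A cut strictly inside a bead splits one $\pm 1$ bead fractionally; pushing it to either endpoint awards the whole bead to one thief and changes that thief's amount of the corresponding type by strictly less than one. Carrying out this rounding (via the standard sliding argument, which reduces matters to at most one fractionally cut bead per type, as in \cite[Section~5.1]{alon2006algorithmic}) keeps each thief's amount of type $i$ within one of $m_i/2$; since the rounded amounts are integers summing to $m_i$, they must equal $\lfloor m_i/2\rfloor$ and $\lceil m_i/2\rceil$, so they differ by at most one unit, and by $0$ exactly when $m_i$ is even. The number of cuts does not increase, so at most $n$ cuts remain.

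For the \emph{moreover} part I would reduce the prescribed-surplus statement to the exact (even) case just obtained: given the desired recipient for each type $i$ with $m_i$ odd, append a phantom type-$i$ bead whose sign makes the augmented amount even, run the exact splitting on the augmented necklace, and delete the phantoms (placed at an end, so they do not add cuts on the original necklace), which creates the off-by-one discrepancy. The delicate point, and the step I expect to be the main obstacle, is forcing each phantom bead toward the thief that places the surplus on the prescribed side: the global thief-swap available from the second zero $-z^\ast$ of the odd map $f$ flips all surpluses simultaneously but grants no per-type control. Overcoming this will require finer bookkeeping, for instance an induction on the number of odd types or a biased variant of the Borsuk--Ulam map that encodes the prescribed surpluses while preserving oddness.
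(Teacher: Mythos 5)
Your first two steps are essentially the paper's own proof: the paper invokes the Hobby--Rice theorem (whose standard proof is exactly the Borsuk--Ulam argument with the $z_k^2$ parametrization that you sketch) to obtain a continuous exact halving with at most $n$ cuts, and then discretizes by the same cut-sliding argument. The genuine gap is the one you flag yourself: the \emph{moreover} clause. Your phantom-bead reduction fails for precisely the reason you give --- the only symmetry available, replacing $z^\ast$ by $-z^\ast$, swaps the thieves globally and flips all surpluses simultaneously, so it cannot route the phantom bead of each odd type to a prescribed thief independently. This clause is not decorative: it is exactly what the proof of \Cref{thm:multi-neckl} consumes (equations \eqref{1teq1} and \eqref{1teqn} prescribe \emph{opposite} one-unit surpluses for types $1$ and $n$), so leaving it as an acknowledged obstacle means your proposal only establishes the first sentence of the proposition.

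The missing observation is that the per-type control is already latent in your own rounding step; no biased Borsuk--Ulam map and no induction on the number of odd types is needed. After the sliding argument you may assume each type has at most one cut located strictly inside a bead. If type $i$ has such a cut, then whole beads contribute integers to each thief's amount of type $i$ while the cut bead contributes a proper fraction, so each thief's amount $m_i/2$ is non-integral and hence $m_i$ is odd; conversely, every odd type retains exactly one cut bead. Now resolve each such cut by pushing it to whichever endpoint of its bead you wish: this hands the entire bead to the thief of your choice, turns the two amounts into $\lceil m_i/2\rceil$ and $\lfloor m_i/2\rfloor$, and --- this is the key point --- changes nothing for any other type, because the densities $g_j$ with $j\neq i$ vanish identically on that bead. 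The resolutions for different types are therefore independent free choices, which is verbatim the \emph{moreover} statement; this is exactly how the paper concludes, and your write-up becomes complete once this observation replaces the phantom-bead discussion.
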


The types requiring that one of the thieves receives an extra bead are precisely those whose total amount of beads is an odd number.

\begin{proof}[Proof of \Cref{prop}]
We proceed in two steps. In a first step, we prove a continuous version, in which we are allowed momentarily to locate cuts on beads themselves. In a second step, we show how to move the cuts so as to get a splitting with cuts leaving the beads untouched.\footnote{An alternative purely combinatorial proof is presented in \Cref{sec:alt-proof-prop2}.}

We identify the necklace with $[0,1]$ and the beads with intervals included in $[0,1]$, all of same length (this length is the inverse of the total number of beads in the necklace), and with disjoint interiors. Assume that these small intervals are all open. (This assumption is made to ease the proof, but actually whether these small intervals contain or not their boundaries does not matter.) Define
$$
g_i(x) \longmapsto \left \{ 
\begin{array}{ll} +1 & \text{if $x$ is in a interval corresponding to a positive bead of type $i$.} \\ 
-1 & \text{if $x$ is in a interval corresponding to a negative bead of type $i$.} \\ 
0 & \text{otherwise.}
\end{array} \right.
$$
According to the Hobby--Rice theorem~\cite{hobby1965moment}, there are points $0=x_0 < x_1 < \cdots < x_r < x_{r+1}=1$ with $r \leq n$ such that for all $i \in [n]$
$$\sum_{j=1}^{r+1} (-1)^j \int_{x_{j-1}}^{x_j}g_i(u)\operatorname{d}u=0\enspace.$$ (Here, we take the formulation given by Pinkus~\cite{pinkus1976simple}.) The $r$ points $x_1,\ldots,x_r$ can be interpreted as cuts. The intervals $(x_{j-1},x_j)$ with $j$ odd are given to one thief, and the intervals $(x_{j-1},x_j)$ with $j$ even are given to the other thief. Each thief gets the same amount of each type. The only problem is that some cuts may be located on beads.

The second step of the proof consists in explaining how to move these cuts so that none of them touch the beads anymore, without creating a difference of more than one unit between the amounts received by the thieves for each type. We can make that each bead is cut at most twice since moving two cuts inside a bead in the same direction and by the same distance does not change the amounts received by the thieves. If a bead is cut twice, we can similarly move the two cuts until one of them is located between two beads. (Doing this, we can have several cuts located at the same position between two beads, but this is not an issue.) So, we can assume that each bead is cut at most once and that the two parts of a cut bead go to distinct thieves. If a type has two beads or more touched by a cut, then we can move two cuts so that one at least reaches a position between two beads, without changing the amounts received by each thief. Thus, we can assume that each type is cut at most once. We finish the proof by noting that if a type has a bead that is cut, it means that each thief received a non-integral amount of the corresponding type, i.e., a half-integer, and moving the cut arbitrarily leads to the desired splitting.
\end{proof}




\begin{proof}[Proof of \Cref{thm:multi-neckl}]
Denote by $s_1,\ldots,s_n$ the $n$ necklaces. We assume
that there are no two beads located at the endpoints of some necklaces 
and that are of the same type but of opposite signs, 
for otherwise there would be an easy solution: cut these two beads from the necklaces, form a balanced
part with them, and leave the rest of the remaining beads for the second balanced
part. The number of cuts would then be $2\leq n$ (or, if these two beads formed
a necklace of their own, the solution would need no cut), and the number of
subnecklaces in the parts would be $2$ and $n$ (or $1$ and $n-1$).
Making this assumption corresponds to considering only the irreducible case in the terminology of \Cref{thm:o_n_n_mcfl}.

A natural idea would be to apply a result like \Cref{prop} to the ``big'' necklace 
$s=s_1\cdots s_n$ obtained by appending the necklaces in their index order. 
The first issue with this idea is that one thief might get nothing. This can 
easily be dealt with as done below. The second issue is that, even though 
there are $2n$ subnecklaces in total, one thief might get more than $n$ subnecklaces.\footnote{
In essence, this is the reason why Ho~\cite{Thewordproblemofnisamultiplecontextfreelanguage} could only show
that $O_n$ is an $\left(8\left\lfloor \frac{n+1}{2} \right\rfloor-2\right)$-MCFL.}
Instead we consider another big necklace, which we will denote by $s'$ and which we define now.

We start by defining $s'_1$ to be $s_1$ from which the left-most bead has been removed,
and $s'_n$ to be $s_n$ from which the right-most bead has been removed. Note
that without loss of generality, we can assume that the left-most bead of $s_1$
is a positive bead of type $1$ and that the right-most bead of $s_n$ is a positive bead of type $1$ or $n$.
We thus consider two cases:
\begin{enumerate}[label=(\roman*)]
\item\label{case:11} The left-most bead of $s_1$ and the right-most bead of
  $s_n$ are both positive beads of type $1$.
\item\label{case:1n} The left-most bead of $s_1$ is a positive bead of type $1$ and the right-most
  bead of $s_n$ is a positive bead of type $n$.
\end{enumerate}

The condition of the theorem ensures that neither $s'_1$ nor $s'_n$
are empty. Consider the ``big'' necklace $s' = s'_1\bar s_2s_3\bar s_4\cdots$. If $n$ is even, the
big necklace $s'$ ends with $\bar s'_n$, and if $n$ is odd, it ends with $s'_n$.
(Given a sequence $t$ of beads, the notation $\bar t$ means $t$ where all positive beads become negative and conversely, without changing their types.) 
According to \Cref{prop}, there is a splitting of $s'$ into $n+1$
subnecklaces $t_1,\ldots,t_{n+1}$ (completing with zero-length subnecklaces if necessary) such that we have for $i \in [n]$ in Case~\ref{case:11} and for $i\in\{2,\ldots,n-1\}$ in Case~\ref{case:1n}
\begin{align}
  \ds{\sum_{k\text{ odd}}\mu_i(t_k)} & = \ds{\sum_{k\text{ even}}\mu_i(t_k)} \enspace,\label{eqi}
\end{align}
and such that in Case~\ref{case:1n}
\begin{align}
  \ds{\sum_{k\text{ odd}}\mu_1(t_k)} & = \ds{\sum_{k\text{ even}}\mu_1(t_k)-1} \enspace,\label{1teq1} \\
  \ds{\sum_{k\text{ odd}}\mu_n(t_k)} & = \ds{\sum_{k\text{ even}}\mu_n(t_k)+1} \enspace. \label{1teqn}
\end{align}
Here, we denote by $\mu_i(t)$ the amount of beads of type $i$ in the subnecklace $t$. We remind the reader that the amount of beads is the number of positive beads minus the number of negative beads of this type.

We interpret now the endpoints of the subnecklaces $t_k$ as cuts of the ``big'' necklace $s$ (the one formed by the original necklaces $s_j$). 
Together with the endpoints of the $s_j$, we get a splitting of $s$ into $2n$ subnecklaces $u_1,\ldots,u_{2n}$
(in this order). Some $u_{\ell}$ can be zero-length subnecklaces.
We make two parts: a part $A$ formed by the 
$u_{\ell}$ with an odd $\ell$, and a part $B$ formed by the $u_{\ell}$ with an even $\ell$.
Remark two things:
\begin{itemize}
\item Since the left-most bead of $s_1$ belongs to $u_1$ and the right-most bead of $s_n$ belongs to $u_{2n}$, each part contains at least one bead.
\item The number of subnecklaces  $u_{\ell}$ is the same in each part, and thus equal to $n$.
\end{itemize}

We finish the proof by checking that both $A$ and $B$ are balanced. 
We denote by
$q^A_i$ and $q^B_i$ the amount of beads of type $i$
in $A$ and $B$, respectively. Since the
original collection is balanced, we have $q^A_i+q^B_i=0$. The end of the proof consists simply in checking that we have also $q^A_i-q^B_i=0$, which implies then immediately that $q^A_i=q^B_i=0$, as desired.

Each bead $x$ belongs to exactly one $s_j$. 
Moreover, apart from the two beads at the endpoints of $s$, any bead $x$ (resp. $\bar x$) belongs also to exactly one $t_k$ when $j$ is odd (resp. even). It is immediate to check
that $j+k-1$ and the index $\ell$ of the $u_{\ell}$ to which $x$ belongs have the same parity. Thus if $j+k-1$ is odd, then $x$ belongs to $A$,
and if it is even, then $x$ belongs to $B$. We have
\[
q^A_1 = 1 + \sum_{j,k\text{ odd}} \mu_1(s_ j\cap t_k) - \sum_{j,k\text{ even}}  \mu_1(\bar s_ j\cap t_k)  ~~ \mbox{and} ~~
q^B_1 =  \delta_{\text{\ref{case:11}}} + \sum_{\substack{j\text{ odd} \\ k\text{ even}}} \mu_1(s_ j\cap t_k) - \sum_{\substack{j\text{ even} \\ k\text{ odd}}}  \mu_1(\bar s_ j\cap t_k) \enspace ,
\]
and for $i \neq 1$, we have 
\[
q^A_i = \sum_{j,k\text{ odd}} \mu_i(s_ j\cap t_k) - \sum_{j,k\text{ even}}  \mu_i(\bar s_ j\cap t_k)  \quad \mbox{and} \quad
q^B_i = \delta^{i = n}_{\text{\ref{case:1n}}} + \sum_{\substack{j\text{ odd} \\ k\text{ even}}} \mu_i(s_ j\cap t_k) - \sum_{\substack{j\text{ even} \\ k\text{ odd}}}  \mu_i(\bar s_ j\cap t_k) \enspace ,
\]
where $\delta_{\text{\ref{case:11}}} \in \{0,1\}$ and takes the value $1$ only if we are in Case~\ref{case:11}, and where $\delta^{i=n}_{\text{\ref{case:1n}}} \in \{0,1\}$ and takes the value $1$ only if we are in Case~\ref{case:1n} and $i=n$.

For the beads of type $1$, we have
\[
\begin{array}{rcl}
  q^A_1- q^B_1 & = & \ds{1 -  \delta_{\text{\ref{case:11}}} + \sum_{k=1}^{n+1}\sum_{j\text{ odd}}(-1)^{k+1}\mu_1(s_j\cap t_k) +
   \sum_{k=1}^{n+1}\sum_{j\text{ even}}(-1)^{k+1}\mu_1(\bar s_j\cap t_k)} \smallskip\\
            & = & \ds{1 -  \delta_{\text{\ref{case:11}}} + \sum_{k=1}^{n+1}(-1)^{k+1}\mu_1(t_k)}\smallskip \\
            & = & 0  \enspace ,
\end{array}
\]
where the last equality is a consequence of Equation~\eqref{eqi} in Case~\ref{case:11} and of Equation~\eqref{1teq1} in Case~\ref{case:1n}. For the beads of type $i \neq 1$, we have
\[
\begin{array}{rcl}
  q^A_i- q^B_i & = & \ds{- \delta^{i = n}_{\text{\ref{case:1n}}} + \sum_{k=1}^{n+1}\sum_{j\text{ odd}}(-1)^{k+1}\mu_i(s_j\cap t_k) +
   \sum_{k=1}^{n+1}\sum_{j\text{ even}}(-1)^{k+1}\mu_i(\bar s_j\cap t_k)} \smallskip\\
            & = & \ds{-  \delta^{i = n}_{\text{\ref{case:1n}}}+ \sum_{k=1}^{n+1}(-1)^{k+1}\mu_i(t_k)}\smallskip \\
            & = & 0  \enspace ,
\end{array}
\]
where the last equality is a consequence of Equation~\eqref{eqi}, except when $i=n$ and we are in Case~\ref{case:1n}, where we use of Equation~\eqref{1teqn} instead.
\end{proof}

\section{Alternate combinatorial proofs of Proposition~\ref{prop} and Lemma~\ref{lem:decomposition}}
\label{sec:combinatorial-proofs}

All statements deal with discrete objects and properties. It is thus desirable
from a logical point of view that all proofs stay in the ``discrete world'' if possible. Similarly to the traditional proof of the necklace splitting theorem, the proof of \Cref{prop} we gave in \Cref{sec:necklace} is relying on some ``continuous'' notions. P\'alv\H{o}lgyi~\cite{palvolgyi2009combinatorial} showed how such proofs are amenable to the discrete world by using a combinatorial counterpart of the Borsuk--Ulam theorem.

We adapt here the approach proposed by P\'alv\H{o}lgyi to provide a purely combinatorial proof of \Cref{prop}, also relying on Tucker's lemma. Moreover, we show how the more general Ky Fan lemma can actually provide a direct combinatorial proof of \Cref{lem:decomposition} itself.

\subsection{Combinatorial tools}

We start by introducing some notation.
Let \(\Bool = \{-1, 1\}\). For brevity, we may write \(+\) instead of \(+1\) and \(-\) instead of \(-1\).
Let \(\mO = \{-1,0, 1\}\). We define the partial order \(\prec\) on \(\mO\) where \(0 \prec 1\) and \(0 \prec -1\), or, equivalently, \(b \preceq b'\) if \(b = 0\) or \(b = b'\).
For every \(m \in \mathbb{N}\), we lift \(\prec\) to \(\mO^m\) where for \(x, y \in \mO^m\) we have \(x \preceq y\) if for all \(i \in [m]\), \(x(i) \preceq y'(i)\). Given a string \(x\), we denote by \(x(i)\) the \(i\)-th letter of \(x\). We also denote by $-x$ the
string obtained from $x$ by replacing $1$'s by $-1$'s and $-1$'s by
$1$'s (and the $0$'s are left unchanged).

Now the Ky Fan lemma can be stated as follows:
\begin{lemma}[Octahedral Ky Fan lemma, {\cite[Lemma 2]{CHEN20111062}}]\label{lemma:ky-fan} 
	Let \(\lambda\colon \mO^m \setminus 0^m \to \{\pm1, \ldots, \pm q\}\) such that 
	\begin{enumerate}[label=\textup{(\roman*)}]
		\item\label{tucker:1} \(\lambda(-x) = -\lambda(x)\) for every \(x\).
		\item\label{tucker:2} \(\lambda(x) + \lambda(y) \neq 0\) for every \(x \preceq y\).
	\end{enumerate}
	Then there is at least one positively alternating \(m\)-chain, i.e.,
	there are \(x_1, \ldots, x_m \in \mO^m \setminus 0^m\) and \(j_1, \ldots, j_m \in [q]\) such that 
	\(x_1 \preceq \cdots \preceq x_m\), \(1 \leq j_1 < j_2 < \cdots < j_m\), and 
	\(\lambda(\{x_1, \ldots, x_m\}) = \{j_1, -j_2, j_3, \ldots, (-1)^{m-1} j_m\}\). 
	In particular, \(q \geq m\).
\end{lemma}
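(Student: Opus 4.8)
The plan is to recognize \Cref{lemma:ky-fan} as Ky Fan's classical combinatorial lemma, specialized to one particular antipodally symmetric triangulation of a sphere, and then to prove that instance by the standard inductive path-following argument. First I would identify the poset \((\mO^m \setminus 0^m, \preceq)\) with the face poset of the boundary of the \(m\)-dimensional cross-polytope \(\conv\{\pm e_1,\dots,\pm e_m\}\), where \(e_1,\dots,e_m\) are the standard basis vectors: a nonzero sign vector \(\sigma\) names the face spanned by \(\{\sigma(i)\,e_i : \sigma(i)\neq 0\}\), and \(\sigma \preceq \tau\) holds exactly when the face named by \(\sigma\) is contained in the face named by \(\tau\). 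The order complex \(K_m\) of this poset (the complex of its chains) is then the barycentric subdivision of the cross-polytope boundary, a triangulation of \(S^{m-1}\) invariant under the free involution \(\sigma \mapsto -\sigma\). Under this dictionary an edge of \(K_m\) is precisely a comparable pair \(\sigma \prec \tau\), a maximal simplex is a chain \(x_1 \prec \cdots \prec x_m\), hypothesis~\ref{tucker:1} says the labeling is antipodal, hypothesis~\ref{tucker:2} says that no edge carries complementary labels, and a positively alternating \(m\)-chain is exactly a maximal simplex whose label set is \(\{j_1, -j_2, \dots, (-1)^{m-1} j_m\}\) with \(0 < j_1 < \cdots < j_m\). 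Thus \Cref{lemma:ky-fan} is Ky Fan's lemma for \(K_m\), and the bound \(q \geq m\) is the usual \(q \geq \dim S^{m-1} + 1\).

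Second, I would prove this instance by induction on \(m\). The base case \(m=1\) is immediate: \(\mO^1\setminus 0^1=\{+1,-1\}\) with \(\lambda(-1)=-\lambda(+1)\), so exactly one vertex carries a positive label and is the unique positively alternating \(1\)-chain. For the inductive step I would cut \(S^{m-1}\) along the equator \(\{\sigma : \sigma(m)=0\}\cong K_{m-1}\); the two hemispheres \(H_{\pm}=\{\sigma : \sigma(m)\in\{0,\pm1\}\}\) are triangulated balls with this common equatorial boundary, swapped by the involution. Restricting \(\lambda\) to the equator preserves hypotheses~\ref{tucker:1}--\ref{tucker:2}, so by induction the equatorial sphere \(S^{m-2}\) carries an odd number of positively alternating \((m-1)\)-chains. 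I would then run the classical chain-tracing argument inside a hemisphere: one builds a graph on suitable ``almost alternating'' simplices in which every node has degree at most two, the degree-one nodes being exactly the genuine alternating maximal simplices of \(K_m\) in that hemisphere together with the alternating facets sitting on the equator. A handshake count, combined with the antipodal symmetry relating the two hemispheres, then shows that the number of positively alternating \(m\)-chains of \(K_m\) has the same parity as the equatorial count, hence is odd and in particular nonzero.

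The main obstacle is the bookkeeping hidden in ``almost alternating'': one must choose the codimension-one objects so that every non-alternating maximal simplex is incident to an even number of them while every alternating one is incident to exactly one, and so that this incidence is compatible with the free action \(\sigma\mapsto-\sigma\), ensuring the two hemispheres contribute coherently and the count truly reduces to the equator. Pinning down the sign conventions so that the pattern \((+,-,+,\dots)\) is preserved along traced paths, and checking that hypothesis~\ref{tucker:2} (inherited on the equator) rules out the degenerate pairings, is the delicate step; the rest is formal. One could instead bypass the hand-built induction and deduce the lemma from the Borsuk--Ulam theorem through the odd degree of antipodal self-maps of \(S^{m-1}\); but since the surrounding section deliberately stays in the discrete world and already invokes Tucker's lemma, I would keep the argument purely combinatorial.
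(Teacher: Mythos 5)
The paper offers no proof of \Cref{lemma:ky-fan} at all: the lemma is imported from the literature (\cite{CHEN20111062}, ultimately \cite{fan52:_gener_tucker_combin_lemma_topol_applic}), with only the parenthetical remark that it is the case of Ky Fan's lemma where the simplicial complex is the first barycentric subdivision of the octahedron. Your proposal therefore does strictly more than the paper, and its skeleton is sound. Your dictionary --- \((\mO^m\setminus 0^m,\preceq)\) as the face poset of the boundary of the cross-polytope, its order complex as the barycentric subdivision triangulating \(S^{m-1}\), edges as comparable pairs, maximal simplices as \(m\)-chains, hypotheses \ref{tucker:1} and \ref{tucker:2} as antipodality and absence of complementary edges --- is exactly the identification the paper's parenthetical alludes to, and your induction is Fan's original parity argument specialized to this triangulation (the modern path-following variant of Prescott and Su would also work here).

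The bookkeeping you flag as the delicate step does close, but the choice is forced in a way worth recording: the codimension-one objects must be the \emph{positively} alternating facets only. (If one takes alternating facets of both signs, every maximal simplex --- alternating or not --- meets an even number of them, all degrees in your graph become even, and the handshake argument collapses.) Two observations then finish it. First, since all vertices of a simplex are pairwise comparable, hypothesis \ref{tucker:2} forbids opposite labels anywhere inside a simplex, though repeated labels may occur. Second, a maximal simplex that is alternating, positively or negatively, has exactly one positively alternating facet (drop the vertex of largest, respectively smallest, absolute label), while a non-alternating maximal simplex has zero or two of them (two in the repeated-label case). Counting incidences between maximal simplices of the closed hemisphere \(H_+\) and positively alternating facets --- an interior such facet lies in two maximal simplices of \(H_+\), an equatorial one in exactly one --- shows that the number of alternating maximal simplices of \(H_+\), both signs included, is congruent mod \(2\) to the number of positively alternating \((m-1)\)-chains of the equator. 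Since every maximal chain lies in exactly one closed hemisphere (its top element has full support) and antipodality turns the negatively alternating simplices of \(H_+\) into the positively alternating simplices of \(H_-\), that count equals the total number of positively alternating \(m\)-chains of \(K_m\); with your base case and the inductive hypothesis this number is odd, hence nonzero, and \(q\geq m\) follows from the \(m\) distinct values \(j_1<\cdots<j_m\). In short: your approach is correct and self-contained where the paper merely cites, at the cost of carrying out this (elementary but sign-sensitive) incidence analysis.
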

(The statement we provide here is actually a special case of the original Ky Fan lem\-ma~\cite{fan52:_gener_tucker_combin_lemma_topol_applic} when the simplicial complex is the first barycentric subdivision of the octahedron.) The octahedral Tucker lemma~\cite{tucker_lemma_46} is actually the same statement without the existence of the chain, but still with the inequality $q \geq m$.

We explain now how strings in $\mO^m$ relate to decompositions of strings in $\Sigma_n^m$. We consider here all decompositions of a string $w \in \Sigma_n^m$ into two tuples $(u_1, u_3,\dots)$, $(u_2,u_4,\dots)$ of strings in $\Sigma_n^{+}$ such that $w = u_1u_2\cdots$. Such a decomposition is described as a string $x \in \Bool^m$ where each maximal segment of consecutive $-1$'s, as well 
as each maximal segment of consecutive $+1$'s, corresponds to one $u_j$.  Each sign change in $x$ corresponds thus to a cut in $w$ and to a transition from a $u_j$ to $u_{j+1}$. A string $x \in \mO^m$, containing some $0$'s, can be interpreted as an underspecified decomposition, which can be completed into different decompositions by replacing the $0$'s with $-1$'s or $+1$'s.


\subsection{Combinatorial proof of Proposition~\ref{prop}}
\label{sec:alt-proof-prop2}

We prove \Cref{prop} in a way that is similar
to~\cite{palvolgyi2009combinatorial}. 

For $x$ in $\mathbb{B}^\ast$ we let $\operatorname{alt}(x)$ be the number of sign
alternations in $x$. We extend the function $\operatorname{alt}$ to $\mathbb{O}^\ast \setminus 0^\ast$
as follows: $\operatorname{alt}(x) = \max\{\operatorname{alt}(y) \mid y\in\mathbb{B}^\ast,\,
x\preceq y\}$. Notice that:
\begin{itemize}
\item $\operatorname{alt}(-x) = \operatorname{alt}(x)$ for every $x$.
\item $\operatorname{alt}(x)\geq \operatorname{alt}(y)$ for every $x\preceq y$.
\end{itemize}
For $x$ in $\mathbb{O}^\ast \setminus 0^\ast$, we define $\operatorname{sign}(x) \in\{-1,1\}$ as
the first letter of some $y$ in $\mathbb{B}^\ast$ so that $x\preceq y$ and
$\operatorname{alt}(x) = \operatorname{alt}(y)$. The function $\operatorname{sign}(x)$ is well
defined as when $x$ is of the form $0^k \kappa x'$, with $\kappa \in \mathbb{B}$, the only way to replace the
first $k$ zeroes so as to maximize $\operatorname{alt}$ consists in changing the sign at
each position. 

From now on we fix the necklace $w$ and assume it is of length $m$ (i.e., $w$ is
in $\Sigma_n^m$).

We let $E_{\kappa,i}(x)$ for $x$ in $\mathbb{O}^m \setminus 0^m$ and $\kappa \in
\{{-1},\,1\}$ be the amount of beads of type $i$ in $w$ that are aligned with
the symbol $\kappa$ in $x$. We say that $x$ is $\kappa i$-\emph{unbalanced} if 
$E_{\kappa,i}(y)>E_{-\kappa,i}(y)$ for every $y$ such that $x\preceq y$. We
define $\unb(x)$ to be $\kappa i$ where $i$ is the smallest $i$ so that
$x$ is $\kappa i$-unbalanced. When no such $i$ exists, we let $\unb(x) =0$. We have
\begin{itemize}
	\item $\unb(-x) = -\unb(x) $ for every $x$. 
	\item $|{\unb(x)}| \geq |{\unb(y)}| > 0$ for every $x \preceq y$ if $\unb(x) \neq 0$.
\end{itemize}  
Notice that $|{\unb(x)}| = |{\unb(y)}|$ implies	$\unb(x) =\unb(y)$
for every $x\preceq y$.
Finally we define $\lambda$ as the function from $\mathbb{O}^m$ to $\{-m+1,\dots,
0,\dots,m-1\}$:
\[
  \lambda (x) = \left\{
    \begin{array}{ll}
      \operatorname{sign}(x)\operatorname{alt}(x)&\text{ when } \operatorname{alt}(x) > n\enspace .\\
      \unb(x) & \text{ when } \operatorname{alt}(x) \leq n \enspace .
    \end{array}
  \right .
\]
Because of the properties of $\operatorname{alt}$, $\operatorname{sign}$ and $\unb$,
had it no zero, the function $\lambda$ would satisfy the properties
\ref{tucker:1} and \ref{tucker:2} of the octahedral Tucker lemma. However, then
the conclusion is not satisfied, thus $\lambda$ must have a zero. Take $x$ so
that $\lambda(x) = 0$. We must have that $\operatorname{alt}(x)\leq n$. As
$\unb(x)=0$, it is possible, for each type $i$, to replace $0$'s
in $x$ with $1$'s or $-1$'s so that we eventually obtain $y$ verifying $E_{1,i}(y) = E_{-1,i}(y)$ for every
$i$, 
and such that the remaining $0$'s in $y$ are aligned with
at most one unassigned bead of type $i$. We can then choose to replace these
$0$'s with either $1$ or $-1$ depending on how we wish to treat the extra bead of a
given type. As $\operatorname{alt}(x)$ is smaller than $n$, then no matter how we have
conducted the previous changes, we have obtained a way to split the necklace
with at most $n$ cuts so that each part contains the same amount of each
type, with the extra beads (when the amount is odd) being distributed in
any possible way.

\begin{remark}
	The definition of $\unb$ is where
	the proof departs from the one of~\cite{palvolgyi2009combinatorial}. Indeed P\'alv\H{o}lgyi's
	proof only considers positive beads and it is then enough to consider that $x$
	is unbalanced when one of the thief receives more than half of the beads of
  some type. In that case the split remains unbalanced for every $y$ so that $x\preceq
	y$. This property is no longer true with negative beads.
	Our definition of $\unb$ is taking this into account by imposing that
	being unbalanced is closed under $\preceq$.
\end{remark}

\subsection{Combinatorial proof of Lemma~\ref{lem:decomposition}}
Lastly, we present a direct combinatorial proof of \Cref{lem:decomposition} that avoids 
\Cref{thm:multi-neckl}. The method is again inspired by P\'alv\H{o}lgyi~\cite{palvolgyi2009combinatorial}, 
but instead of constructing a function \(\lambda\) that contradicts the Tucker lemma, if it has 
no zero, we construct a pair of functions \(\lambda_+\) and \(\lambda_-\), for which the chains guaranteed by 
the Ky Fan lemma cannot exist for irreducible tuples.

In contrast to \Cref{sec:alt-proof-prop2}, we do not consider \(x \in \Bool^m\) to represent the decomposition
of a single string but instead $x$ shall represent the decomposition of an 
irreducible tuple $(s_1,\dots, s_n)$ in $(\Sigma_n^{+})^n$ 
where the word $s_1 \cdots s_n$ is in $O_n$ and each \(s_j\) is of length at least \(2\).
To account for the arity of the tuple, we need to generalize the notion of \(\malt\).
Denote the word \(s_1 \cdots s_n\) by \(s\) and the length of \(s\) by \(m\). 
\begin{figure}[b]
	\centering
	\begin{tikzpicture}
		\matrix[matrix of math nodes, ampersand replacement=\&, every node/.append style={font=\strut}] (m) {
			\& a_1 \& a_2 \& \bar{a}_2 \& a_3 \& \bar{a}_2 \& \bar{a}_3 \& a_1 \& \bar{a}_1 \& \bar{a_1} \& \bar{a}_2 \\
			x = \&  +  \&  -  \&   -       \&   - \&  -        \&   -       \&  -  \&    +      \&   -       \&    -      \\
			\malt(x) = \& \& \& \& \& \& \& \& \& \& \& = 5	 \\
			 (    \& u_1 \& \& u_2 \& \& \& u_4  \& ~ \& u_5 \& u_6 \& \& ) \\
		};
		\begin{scope}[b/.style={decorate,decoration={brace,amplitude=10pt}, shorten >= 4pt, shorten <= 4pt}]
			\draw[b] (m-1-2.north west) to node[above=8pt] {$s_1$} (m-1-5.north east);
			\draw[b] (m-1-6.north west) to node[above=8pt] {$s_2$} (m-1-8.north east);
			\draw[b] (m-1-9.north west) to node[above=8pt] {$s_3$} (m-1-11.north east);
		\end{scope}
		\draw[-latex] (m-2-2.south) to[bend right=40] node[below] {+1} (m-2-3.south);
		\draw[-latex] (m-2-5.south) to[bend right=40] node[below] {+2} (m-2-6.south);
		\draw[-latex] (m-2-8.south) to[bend right=40] node[below] {+1} (m-2-9.south);
		\draw[-latex] (m-2-9.south) to[bend right=40] node[below] {+1} (m-2-10.south);
		\node[anchor=base] at ($(m-4-4.base)!.4!(m-4-7.base)$) {$u_3,$};
		\node[anchor=base] at ($(m-4-8.base)!.5!(m-4-9.base)$) {$,$};
		\begin{scope}[on background layer, plus/.style={pattern=north east lines, draw=gray!30, pattern color=gray!30}, minus/.style={pattern=north west lines, draw=gray!30, pattern color=gray!30}]
			\draw[plus, ] (m-4-2.120)  -- (m-1-2.south west) --  (m-1-2.south east) --  (m-4-2.60) --cycle;
			\draw[minus,] (m-4-4.120)  -- (m-1-3.south west) --  (m-1-5.south east) --  (m-4-4.60) --cycle;
			\draw[minus,] (m-4-7.120)  -- (m-1-6.south west) --  (m-1-8.south east) --  (m-4-7.60) --cycle;
			\draw[plus, ] (m-4-9.120)  -- (m-1-9.south west) --  (m-1-9.south east) --  (m-4-9.60) --cycle;
			\draw[minus,] (m-4-10.120) -- (m-1-10.south west) -- (m-1-11.south east) -- (m-4-10.north east) --cycle;
		\end{scope}
	\end{tikzpicture}
	\caption{A word \(s_1s_2s_3\), a decomposition \(x\) with the corresponding alignment of \(s_j\)'s to \(u_p\)'s (in particular, \(u_3 = \varepsilon\)), and the calculation of \(\malt(x)\).}
	\label{fig:maltXinterlace}
\end{figure}
Factorize \(x\) such that \(x = \kappa^{\ell_1}_1 \kappa^{\ell_2}_2 \cdots \kappa^{\ell_k}_k\) with \(\kappa_1, \ldots, \kappa_k \in \Bool\), \(\kappa_p = - \kappa_{p+1}\) and \(\ell_p \in \mathbb{N}\) 
be such that for each \(j \in [n]\), there is \(q_j \in \{0, \ldots, k\}\) with \(\sum_{p \in [j]} |s_p| = \sum_{p \in [q_j]} \ell_p\), and \(k\) is minimal. Now, if \(k = 2n\), then \(x\) describes a decomposition of $(s_1, \ldots, s_n)$ in two tuples each of size \(n\).
Precisely, these tuples are $(u_1, u_3, \ldots, u_{2n-1})$ and $(u_2, u_4, \ldots, u_{2n})$ 
where the length of $u_j$ is $\ell_j$ for each $j \in [2n]$ 
and \(s_j = u_{q_{i-1} + 1} u_{q_{i-1} + 2} \cdots u_{q_j}\) for each \(j \in [n]\).
If \(k < 2n\), we may add \(\kappa_q\)'s with \(\ell_q = 0\) to increase \(k\) to \(2n\) and proceed similarly. 
We call the value \(k-1\) the number of \emph{sign alternations of \(x\)} and write \(\malt(x) = k - 1\). 
Hence, in the following we will search for \(x\) with \(\malt(x) \leq 2n - 1\).
Crucially, we assume mandatory sign alternations between the \emph{neighboring endpoints} of \(s_j\) and \(s_{j+1}\): if the last position of \(s_j\) and the first position of \(s_{j+1}\) are signed equally in \(x\), then some \(\kappa_p\) with \(\ell_p = 0\) occurs in our factorization of \(x\). In this case $2$ sign alternations are accounted for the transition between neighboring endpoints.
In order to maximize \(\malt\), we may choose \(x\) to be strictly alternating except for the \(n-1\) transitions between neighboring endpoints. In consequence \(\malt(x) \leq m - (n-1) + 2(n-1) - 1 = m + n - 2\).

\Cref{fig:maltXinterlace} gives an example: Note that \(\malt(x) < 2\cdot3\) and that the subwords in \(+\)-labeled and \(-\)-labeled positions are both in \(O_3\).
Thus, \(x\) describes the decomposition of \( (a_1  a_2  \bar{a}_2  a_3,  \bar{a}_2  \bar{a}_3  a_1 , \bar{a}_1  \bar{a_1}  \bar{a}_2) \) to two strictly smaller \(3\)-tuples \((a_1, \varepsilon, \bar{a}_1)\) and \((a_2  \bar{a}_2  a_3,  \bar{a}_2  \bar{a}_3  a_1 , \bar{a_1}  \bar{a}_2 )\), for each of which the word obtained by concatenating the tuple's components is in \(O_3\).

Again, we consider words in \(\mO^m \setminus 0^m\) which may have \emph{unsigned} positions, i.e., positions labeled with \(0\). 
In this case the decomposition is only partially determined. We define the function \(h\colon \mO^m \setminus 0^m \to [m + n - 2]\) by 
\(h(x) = \max_{y \in \Bool^m \colon x \preceq y} \malt(y)\).
Observe that \(\malt(x) = \malt(-x)\), hence, for each \(x \in \mO^m \setminus 0^m\): \(h(x) = h(-x)\).
Also, for each \(x, y \in \mO^m \setminus 0^m\) with \(x \preceq y\), we have \(h(x) \geq h(y)\).

Denote by \(H(x)\) the set \(\arg\max_{y \in \Bool^m\colon x \preceq y} \malt(y)\).
All elements of \(H(x)\) may be obtained with the following algorithm: 
\begin{enumerate}[label = (\roman*)]
	\item Choose a signed position \(p\) in \(x\) with an unsigned neighbor \(p' \in \{p-1, p+1\}\).\label{step:1}%
	\item If \(p\) and \(p'\) are neighboring endpoints (in this case we call \(p\) and \(p'\) \emph{internal endpoints}), set \(x'(p') = x(p)\); otherwise set \(x'(p') = - x(p)\). For each \(q \neq p'\), set \(x'(q) = x(q)\).
	\item If \(x' \in \Bool^m\) we are done, otherwise, recursively apply the algorithm to \(x'\).
\end{enumerate}
This algorithm may yield words \(y\), \(y'\) where \(y(p) \neq y'(p)\) if \(p\) is an unsigned position between two signed positions of \(x\). However, for all position \(p'\) smaller than the smallest signed position of \(x\), 
the signs \(y(p')\) and \(y'(p')\) are equal for all words \(y,y'\) in \(H(x)\). 
Hence, we may define a function \(\sign\colon \mO^m \setminus 0^m \to \Bool\) to assign to \(x\) the first symbol of some \(y \in H(x)\). 
Moreover, for each \(x \in \mO^m \setminus 0^m\): \(\sign(x) = -\sign(-x)\).

To prove \Cref{lem:decomposition}, we need to show that there is \(x \in \Bool^m\) with \(\malt(x) < 2n\) and that 
both words \(u_1u_3\cdots\) and \(u_2u_4\cdots\) described by \(x\) are in \(O_n\) and non-empty. 
To assure non-emptiness, we just need that both signs occur in \(x\). To ensure that the words are in \(O_n\), 
we rephrase the notions \(E_{\kappa, i}\) and \(\unb\) from \Cref{sec:alt-proof-prop2} in terms of tuples of words instead of necklaces. Now \(E_{\kappa, i}(x)\) is formulated as the difference of the amount of \(a_i\)'s and \(\bar{a}_i\)'s aligned with \(\kappa\) in \(x\). Formally: \[E_{\kappa,i}(x) = |\{p \mid s(p) =
  a_i \land x(p) = \kappa\}|-|\{p \mid s(p) = \bar{a}_i \land x(p) =
  \kappa\}|\enspace.\] Observe that \(E_{+,i}(x) + E_{-,i}(x) = 0\) for \(s \in
O_n\) and \(x \in \Bool^m\).
%
%
We define two functions \(\lambda_+\) and \(\lambda_-\), where  
for each \(b \in \Bool\), \(\lambda_b\colon \mO^m \setminus 0^m \to \{\pm1, \ldots, \pm m\} \cup \{0\}\) is such that
	\newsavebox{\mycases}
	\reqnomode
	\begin{alignat}{2}
		\sbox{\mycases}{$\displaystyle \lambda_b(x) =\left\{%
			\begin{array}{@{}c@{}}%
				\vphantom{\sign(x) \cdot (h(x) - n + 2)\ \sign(x) \cdot (h(x) - n + 2)  \quad \tif h(x) \geq 2n}\\%
				\vphantom{\unb(x)  \quad \tif h(x) < 2 n \land |x|_{+} > 0 \land |x|_{-} > 0} \\ 
				\vphantom{b \cdot (-1) \cdot b' \cdot (n + 1)  \tif h(x) < 2n \land \exists b' \in \Bool \colon |x|_{b'} = 0  }\\
				\vphantom{\cdot}%
			\end{array}%
			\right.\kern-\nulldelimiterspace$}
		\raisebox{-.5\ht\mycases}[0pt][0pt]{\usebox{\mycases}}
		&\sign(x) \cdot (h(x) - n + 2) && \quad \tif h(x) \geq 2n\enspace. \label{case:hg2n} \tag{C1}\\
		&b \cdot (-1) \cdot b' \cdot (n + 1) && \quad \tif h(x) < 2n \land \exists b' \in \Bool \colon |x|_{b'} = 0 \enspace.\label{case:singlesigned} \tag{C2}\\
		&{\unb(x)}&& \quad \tif h(x) < 2 n \land |x|_{+} > 0 \land |x|_{-} > 0 \enspace.\label{case:unbal} \tag{C3}
	\end{alignat}
	\leqnomode
If there are \(x \in \mO^m \setminus 0^m\) and \(b \in \Bool\) such that \(\lambda_b(x) = 0\), 
then \Cref{lem:decomposition} holds: Note that case \ref{case:unbal} of \(\lambda_b\) applies. 
Thus, for each \(i \in [n]\) there is \(y \in \Bool^m\) such that \(E_{+,i}(y) = E_{-,i}(y) = 0\).
The positions relevant to balance \(a_i\) and \(a_j\) where \(i \neq j\) are distinct. 
Hence, we choose \(\hat{y} \in \Bool^m\) with \(x \preceq \hat{y}\) such that all symbols are balanced. 
Since \(\hat{y} \succeq x\) we have that \(\malt(\hat{y}) \leq h(x) < 2n\), \(|\hat{y}|_{+} > 0\), and \(|\hat{y}|_- > 0\). Thus, \(\hat{y}\) encodes a decomposition with the desired properties.
	
Otherwise, if \(\lambda_+\) and \(\lambda_-\) have no zero, we want to apply the Ky Fan lemma. 
Clearly, \(\lambda_b\) satisfies property \ref{tucker:1} of the octahedral Ky Fan lemma if cases \ref{case:hg2n} and \ref{case:unbal} apply. For case \ref{case:singlesigned}, note that \(|x|_{+} = 0 = |{-x}|_{-}\) implies 
\[-\lambda_b(x) = -(b \cdot (-1) \cdot (+1) \cdot (n+1)) = (b \cdot (-1) \cdot (-1)\cdot (n+1)) = \lambda_b(-x) \enspace. \]
It remains to show property \ref{tucker:2}, i.e., that the sum \(\lambda_b(x) + \lambda_b(y)\) is not zero if \(x \preceq y\) and \(\lambda_b(x) \neq 0 \neq \lambda_b(y)\). 
Assume \(\lambda_b(x) + \lambda_b(y) = 0\). By the definition of \(\lambda_b\), there are three cases:
	\begin{enumerate}
		\item[\ref{case:hg2n}:] \(h(x) = h(y) \geq 2n\) and \(\sign(x) = -\sign(y)\). But then \(H(x) \supseteq H(y)\) and, thus, \(\sign(x) = \sign(y)\), a contradiction.
		\item[\ref{case:singlesigned}:] \(h(x), h(y) < 2n\) and either \(|x|_+ = 0 = |y|_-\) or \(|x|_- = 0 = |y|_+\).
		W.l.o.g.\@ assume that \(|x|_- = 0 = |y|_+\). Then there is \(k \in [m]\) such that \(x(k) = +\). 
		But \(x \preceq y\) implies \(y(k) = +\), a contradiction to \(|y|_+ = 0\).
		\item[\ref{case:unbal}:] \(h(x), h(y) < 2n\), \(|x|_+, |y|_+ > 0\), \(|x|_-,|y|_- > 0\) and \(\unb(x) = -\unb(y)\). 
		This contradicts properties of \(\unb\) shown in \Cref{sec:alt-proof-prop2}: for every \(x \preceq y\) with \(\unb(x) \neq 0\) we have $|{\unb(x)}| \geq |{\unb(y)}|$, where $|{\unb(x)}| = |{\unb(y)}|$ implies	$\unb(x) =\unb(y)$.
	\end{enumerate}

\begin{proposition} If \(\lambda_+\) and \(\lambda_-\) have no zero, 
	then every pair of compatible symbols \(a_i, \bar{a}_i \in \Sigma_n\) can be reduced from \((s_1, \ldots, s_n)\).\label{lemma:no-zero-implies-twisting-or-interlacing}
\end{proposition}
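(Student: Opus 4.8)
The plan is to assume that neither $\lambda_+$ nor $\lambda_-$ has a zero and to extract, for each type $i\in[n]$, two endpoints of $(s_1,\dots,s_n)$ carrying $a_i$ and $\bar a_i$; exhibiting such a pair is exactly what it means for the compatible pair $a_i,\bar a_i$ to be reducible. Since both functions then satisfy conditions \ref{tucker:1} and \ref{tucker:2} of \Cref{lemma:ky-fan} (as verified just above) and map $\mO^m\setminus 0^m$ into $\{\pm1,\dots,\pm m\}$, the octahedral Ky Fan lemma applies to each and yields a positively alternating $m$-chain. Because the codomain gives $q=m$ while the lemma forces $q\geq m$, the inequalities $1\le j_1<\cdots<j_m\le q$ collapse to $j_k=k$, so each chain realizes the full signed value set $\{1,-2,3,\dots,(-1)^{m-1}m\}$; in particular every absolute value $1,\dots,m$ occurs exactly once along the chain.

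Next I would read off the structure of the chain from the monotonicity of $h$ and $\unb$ under $\preceq$. Moving up the chain $x_1\preceq\cdots\preceq x_m$, the quantity $h$ is nonincreasing and $|\unb|$ is nonincreasing (and strictly positive once nonzero), hence $|\lambda_b|$ is nonincreasing; since the $m$ absolute values are distinct, $|\lambda_b(x_k)|=m-k+1$ strictly. The three value ranges of $\lambda_b$ are disjoint — case \ref{case:hg2n} forces $|\lambda_b|\ge n+2$, case \ref{case:singlesigned} forces $|\lambda_b|=n+1$, and case \ref{case:unbal} forces $|\lambda_b|\le n$ — so the chain splits into consecutive blocks: the elements of values $n+2,\dots,m$ lie in case \ref{case:hg2n}, the single element of value $n+1$ lies in case \ref{case:singlesigned}, and the top $n$ elements, of values $1,\dots,n$, lie in case \ref{case:unbal}. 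This pins down a $\preceq$-chain $z_n\preceq\cdots\preceq z_1=x_m$ of completions with $\unb(z_i)=(-1)^{i-1}i$ for each $i$, together with a unique single-signed element sitting just below $z_n$. These deductions are forced and purely formal.

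The heart of the argument, and the step I expect to be hardest, is to convert this unbalance data into compatible endpoints. Fixing a type $i$, the relation $\unb(z_i)=(-1)^{i-1}i$ together with $s_1\cdots s_n\in O_n$ means that in \emph{every} completion $y\succeq z_i$ the side of sign $(-1)^{i-1}$ receives a strictly positive net amount of type $i$ (using $E_{+,i}(y)+E_{-,i}(y)=0$), while the still-unsigned beads cannot absorb this surplus and types $1,\dots,i-1$ remain balanceable. I would argue that such a persistent, unabsorbable surplus, under the admissibility constraint $h(z_i)<2n$ and the mandatory sign alternations at neighbouring endpoints, can only be produced by a type-$i$ bead pinned at an endpoint of some $s_j$, and that balancedness of the whole collection then forces a compatible bead at another endpoint. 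Comparing the two chains, whose functions differ only on single-signed strings, supplies both orientations and lets one present the compatible pair as either the two ends of a single $s_j$ (a \emph{twist}, i.e.\ the case $k=\ell$ of a reduction) or the ends of two distinct necklaces (an \emph{interlacing}, i.e.\ the case $k\ne\ell$). Carrying this out for every $i$ yields the proposition. The delicate point throughout is the bookkeeping of how unsigned positions and the forced internal cuts interact with the endpoint beads, and the organisation of the twist-versus-interlacing dichotomy; this is precisely where the combinatorics of irreducibility enters, and where I would expect the bulk of the casework to lie.
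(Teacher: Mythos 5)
Your first two paragraphs track the paper's own proof faithfully: the application of \Cref{lemma:ky-fan} to both $\lambda_+$ and $\lambda_-$, the collapse $j_k=k$, the strict decrease of $|\lambda_b|$ along each chain, and the resulting block structure (the elements of absolute value $\geq n+2$ in case \ref{case:hg2n}, a single element of value $\pm(n+1)$ in case \ref{case:singlesigned}, and the top $n$ elements in case \ref{case:unbal} with forced $\unb$-values $(-1)^{i-1}i$) are exactly the paper's properties \ref{item:onto}--\ref{item:signedness:2}. Your remark that comparing the two chains supplies both orientations $a_i$ and $\bar a_i$ (because $\top_+=-\top_-$) is also the paper's step, and it correctly yields a set $U$ of $2n$ positions at which all letters of $\Sigma_n$ occur.

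The gap is in your third paragraph, which is the heart of the statement: you must show that these $2n$ positions are \emph{endpoints} of the $s_j$'s, since that is what reducibility means, and you only gesture at this. The mechanism you propose (``a persistent, unabsorbable surplus \dots can only be produced by a type-$i$ bead pinned at an endpoint'') is not an argument and points in the wrong direction: the unbalance data alone cannot force this, because a bead in the middle of a word unbalances completions just as well as one at an end. What forces it in the paper is a property of $h$, not of $\unb$: since $|\lambda_b(x^b_k)|=m-k+1$, the value $h$ must strictly decrease at every step of the chain up to $x^b_{m-n}$, and this is incompatible with newly signing any position whose signing is ``free'' for the alternation count. This yields \Cref{lemma:at-most-one-position-signed-in-boundary-cluster} (two neighboring endpoints cannot both be signed in $x^b_{m-n}$, since signing the second with $\top_b$ never lowers $h$) and \Cref{lemma:unsigned-boundary-cluster-not-embraced-by-signs} (an unsigned non-endpoint position cannot be flanked on both sides by signed clusters, shown by exhibiting a common completion reachable in two orders), after which a counting argument finishes: $x^b_{m-n}$ has exactly $n$ unsigned positions, at least $n-1$ of them must be internal endpoints (one per pair of neighboring endpoints), and the possible remaining one is forced to be position $1$ or $m$. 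None of this machinery, nor any workable substitute for it, appears in your proposal; you explicitly defer it as the place ``where I would expect the bulk of the casework to lie.'' Incidentally, the twist-versus-interlacing dichotomy you plan to organize around is unnecessary: once all letters of $\Sigma_n$ are known to sit at endpoints, reducibility of every compatible pair follows without distinguishing whether the two endpoints lie on the same word or on different ones.
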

\noindent \Cref{lem:decomposition} follows directly, as whenever \((s_1, \ldots, s_n)\) is irreducible, \(\lambda_+\) and \(\lambda_-\) have a zero.

To show \Cref{lemma:no-zero-implies-twisting-or-interlacing} assume that for either \(b \in \Bool\) there is no \(x \in \mO^m \setminus 0^m\) such that \(\lambda_b(x) = 0\). By \Cref{lemma:ky-fan}, for each \(b \in \Bool\), there is a positively alternating \(m\)-chain, i.e., there are \(x_1^b \preceq x_2^b \preceq \cdots \preceq x_m^b \) and \(1 \leq j_1^b < j_2^b < \cdots < j_m^b \leq m\) such that \(\lambda_b(\{x_1^b, \ldots, x_m^b\}) = \{(-1)^{k-1} j_k^b \mid k \in [m]\}\).
We derive the following properties of the chains and \((s_1, \ldots, s_n)\):
\begin{enumerate}[label=(\alph*), leftmargin=2.5em]
	\item \label{item:onto} The values \(j_1^b, \ldots, j_m^b\) are pairwise distinct and in \([m]\). Thus, \(j_k^b = k\) for each \(k \in [m]\).
	\item By the definition of \(\prec\), every strictly increasing chain in \(\mO^m \setminus 0^m\) has at most length \(m\). Specifically, \(|x_1^b|_+ + |x_1^b|_{-} = 1\), \(|x_2^b|_+ + |x_2^b|_{-} = 2\), $\ldots$, and \(|x_m^b|_+ + |x_m^b|_{-} = m\). Also, for each \(k \in [m-1]\), \(x^b_k\) and \(x^b_{k+1}\) differ in exactly one position \(p\), i.e., \(x^b_k(p) = 0\) and \(x^b_{k+1}(p) \in \Bool\) and for each \(p' \neq p\): \(x^b_k(p') = x^b_{k+1}(p')\).
	\item \label{item:signedness:1}
	By \ref{item:onto}, all cases (\ref{case:hg2n}, \ref{case:singlesigned}, \ref{case:unbal}) of \(\lambda_b\) apply. 
	Because \(h\) is antitone, \ref{case:hg2n} applies to \(x_1^b, \ldots, x_{m-n-1}^b\) where 
	\(h(x^b_{k}) > h(x^b_{k+1})\) for each \(k \in [m-n-1]\).
	On the other hand, for \(k \in \{0, 1, \ldots, n\}\), we have that \(h(x^b_{m-n+k}) < 2n\). 
	
	Case \ref{case:singlesigned} applies to exactly one of the words \(x^b_{m-n}, \ldots, x^b_{m}\)  while to the remaining words case \ref{case:unbal} applies.
	For \ref{case:singlesigned} to apply, only one sign $\top_b \in \Bool$ may occur in the word, while \ref{case:unbal} requires that both signs occur. 
	Hence \ref{case:singlesigned} applies to \(x^b_{m-n}\) as the words, to which \ref{case:unbal} apply, need to be larger with respect to \(\preceq\).
	Moreover, as \(x^b_{m-n}\) is larger than \(x_1^b, \ldots, x^b_{m-n-1}\), the latter are in \(\{\top_b,0\}^m\), too.
	\(\top_b\) is determined by the sign of \(\lambda_b(x^b_{m-n}) = (-1)^n \cdot (n+1)\), i.e., it depends on \(b\) and the parity of \(n\). Precisely: \(\top_b = b \cdot (-1)^n\).
	
%
%
	\item \label{item:signedness:2}
	\ref{case:unbal} applies to \(x^b_{m-n+k}\) for each \(k \in [n]\) and yields values \(\{1, -2, 3,  \ldots, (-1)^{n-1}n\}\).
	As \(\unb(x^b_{m-n+k}) \neq 0\), we have that \({|{\unb(x^b_{m-n+k})}|} > {|{\unb(x^b_{m-n+k+1})}|}\) for all \(k \in [n-1]\). Thus, 
	\(\lambda(x^b_{m-n+k}) = (-1)^{n-k}\cdot(n-k+1)\). Denote \(n-k+1\) by \(i\).
	Observe that \(\unb(x^b_{m-n}) = 0\) because the word \({\top_b}^m \succeq x^b_{m-n}\) trivially balances every symbol.
	Hence, to ultimately unbalance $a_{i}$, a position \(p^b_{i}\) is newly signed in \(x^b_{m-n+k}\) with \(-\top_b\)
	where \(s(p^b_{i})\) is either \(a_{i}\) or \(\bar{a}_{i}\). 
	The exact choice of $p^b_{i}$ depends on \(\top_b\) and the parity of \(i\):
	If \((-1)^{i-1} = -\top_b\), then \(p^b_i\) is such that \(s(p^b_{i}) = a_{i}\), 
	otherwise, \(p^b_i\) is such that \(s(p^b_{i}) = \bar{a}_{i}\).
\end{enumerate}

Let \(U = \{p \in [m] \mid \exists b \in \Bool\colon x^b_{m-n}(p) = 0\} = \{p^b_i \mid b \in \Bool, i \in [n]\}\). 
From
\[\top_+= (+1) \cdot (-1)^n = (-1) \cdot (-1) \cdot (-1)^n = (-1) \cdot \top_- \] and \ref{item:signedness:2} it follows that
\(p^+_i \neq p^-_i\) and \(a_i, \bar{a}_i \in \{s(p) \mid p \in U\}\) for each \(i \in [n]\).
Hence, \(|U| = 2n\) and \(\Sigma_n \subseteq \{s(p) \mid p \in U\}\). To complete our proof, we further characterize the set \(U\).
\begin{property}\label{lemma:at-most-one-position-signed-in-boundary-cluster}
	Let \(p\) and \(p+1\) be neighboring endpoints. At most one of \(p\) and \(p+1\) is signed in \(x^b_{m-n}\).
\end{property}
\begin{proof}
	Assume the contrary. Let \(k \in [m-n]\) be minimal such that \(x^b_{k}(p) = \top_b = x^b_{k}(p+1)\), i.e., either \(x^b_{k-1}(p) = 0\) or \(x^b_{k-1}(p+1) = 0\). W.l.o.g.\@ assume that \(x^b_{k-1}(p+1) = 0\). Hence, using our algorithm to construct an element of \(H(x^b_{k-1})\), we may set \(p+1\) to \(\top_b\), i.e., we obtain \(x^b_k\). But then \(h(x^b_k) = h(x^b_{k-1})\), contradicting \ref{item:signedness:1}.
\end{proof}

 Let \(p\) and \(p+1\) be neighboring endpoints. We denote \(C(p) = C(p+1) = \{p, p+1\}\). For each position \(p' \in [m]\) that is not an internal endpoint, we denote \(C(p') = \{p'\}\). 
\begin{property}\label{lemma:unsigned-boundary-cluster-not-embraced-by-signs}
	Let \(1 < p < m\) be an unsigned position in \(x^b_{m-n}\) that is not an internal endpoint.
	Then \(C(p-1)\) or \(C(p+1)\) contain only unsigned positions in \(x^b_{m-n}\).
\end{property}
\begin{proof}
	Assume the contrary. Let \(k \in [m-n]\) be minimal such that both \(C(p-1)\) and \( C(p+1)\) contain a signed position in \(x^b_{k}\).
	W.l.o.g.\@ assume that \(C(p-1)\) contains a signed position already in \(x^b_{k-1}\). Using the algorithm to construct an element of \(H(x^b_{k-1})\), we may first set the other position in \(C(p-1)\) to \(\top_b\) (if it exists), then set \(p\) to \(-\top_b\), then set each position of \(C(p+1)\) to \(\top_b\) and call the intermediate result \(y\). We may also obtain \(y\) from \(x^{b}_{k}\) by first setting each unsigned position of \(C(p+1)\) to \(\top_b\), then \(p\) to \(-\top_b\), and then the remaining position of \(C(p-1)\) to \(\top_b\), if it exists.
	Thus, \(h(x^b_{k}) = h(x^b_{k-1})\),  contradicting \ref{item:signedness:1}.
\end{proof}	

	There are \(n-1\) pairs of neighboring endpoints, i.e., in total \(2n-2\) internal endpoints. \(x^+_{m-n}\) and \(x^-_{m-n}\) each have \(n\) unsigned positions, of which,
	by \Cref{lemma:at-most-one-position-signed-in-boundary-cluster}, at least \(n-1\) are internal endpoints. 
	\Cref{lemma:unsigned-boundary-cluster-not-embraced-by-signs} implies that the only unsigned position that is not an internal endpoint, if one exists, is \(1\) or \(m\). Thus, \(U\) consists solely of endpoints of the \(s_j\)'s.
	However, all symbols of \(\Sigma_n\) are distributed at the positions in \(U\). 
	Hence, \((s_1, \ldots, s_n)\) is such that every pair of compatible letters of \(\Sigma_n\) can be reduced. 
	This concludes the proof of \Cref{lemma:no-zero-implies-twisting-or-interlacing}.

\bibliography{biblio} \bibliographystyle{alpha}

\end{document}